\newcommand{\executeiffilenewer}[3]{%
\ifnum\pdfstrcmp{\pdffilemoddate{#1}}%
{\pdffilemoddate{#2}}>0%
{\immediate\write18{#3}}\fi%
}
\newcommand{%
\executeiffilenewer{images/.svg}{images/.pdf}%
{inkscape -z -D --file=images/.svg %
--export-pdf=images/.pdf --export-latex}%
\input{images/.pdf_tex}%
}[1]{%
\executeiffilenewer{images/#1.svg}{images/#1.pdf}%
{inkscape -z -D --file=images/#1.svg %
--export-pdf=images/#1.pdf --export-latex}%
\input{images/#1.pdf_tex}%
}
\newcounter{sideremark}
\newif\ifcomments
\definecolor{blueblack}{rgb}{0,0,.7}
\newcommand{\emphdef}[1]{%                                                      
	\textcolor{blueblack}{%                                                       
		\textbf{\emph{#1}}%                                                         
	}%                                                                            
}
\newcommand{\R}{\mathbb{R}}
\newcommand{\NP}{\textbf{NP}}
\newcommand{\G}{\mathcal{G}}
\newcommand{\Sh}{\mathcal{S}}
\newcommand{\ret}[2]{#1|^{#2}}
\DeclareMathOperator{\h}{height}
\newtheorem{theorem}{Theorem}
\newtheorem{lemma}[theorem]{Lemma}
\newtheorem{proposition}[theorem]{Proposition}
\title{On the complexity of optimal homotopies}%
\author{%
	Erin Wolf Chambers
	\thanks{Dept. of Computer Science, Saint Louis University. email: \protect\url{echambe5@slu.edu}}%
	\and%
	Arnaud de Mesmay%
	\thanks{Univ. Grenoble Alpes, CNRS, Grenoble INP, GIPSA-lab, 38000 Grenoble, France. email \protect\url{arnaud.de.Mesmay@gipsa-lab.fr}}%
	\and%
	Tim Ophelders%
	\thanks{Dept. of Mathematics and Computer Science, TU Eindhoven. email: \protect\url{t.a.e.ophelders@tue.nl}}%
}%
\date{}
\begin{document}
\begin{titlepage}

\maketitle\thispagestyle{empty}

\begin{abstract}\normalsize
	In this article, we provide new structural results and algorithms
	for the \textsc{Homotopy Height} problem. In broad terms, this
	problem quantifies how much a curve on a surface needs to be
	stretched to sweep continuously between two positions. More
	precisely, given two homotopic curves $\gamma_1$ and $\gamma_2$ on a
	combinatorial (say, triangulated) surface, we investigate the
	problem of computing a homotopy between $\gamma_1$ and $\gamma_2$
	where the length of the longest intermediate curve is
	minimized. Such optimal homotopies are relevant for a wide range of
	purposes, from very theoretical questions in quantitative homotopy
	theory to more practical applications such as similarity measures on
	meshes and graph searching problems.

	We prove that \textsc{Homotopy Height} is in the complexity class
	\textbf{NP}, and the corresponding exponential algorithm is the best
	one known for this problem. This result builds on a structural
	theorem on monotonicity of optimal homotopies, which is proved in a
	companion paper. Then we show that this problem encompasses the
	\textsc{Homotopic Fr\'echet distance} problem which we therefore
	also establish to be in \textbf{NP}, answering a question which has previously been
	considered in several different settings. We also provide an $O(\log
	n)$-approximation algorithm for \textsc{Homotopy Height} on surfaces
	by adapting an earlier algorithm of Har-Peled, Nayyeri, Salvatipour
	and Sidiropoulos in the planar setting.

\end{abstract}
\end{titlepage}
\setcounter{page}{1}

\section{Introduction}

This paper considers computational questions pertaining to
\textit{homotopies}: in broad terms, a homotopy between two curves in
a topological space is a continuous deformation between these two
curves. This can be formalized either in a continuous setting, where
it constitutes one of the fundamental constructs of algebraic topology,
but also in a more discrete one, where the input is a simplicial, or
more generally cellular description of a topological space; this
latter setting will be the focus of this article. While considerably more
restrictive than the more traditional mathematical settings, this setting is nonetheless
of key importance in applications areas such as graphics or medical
imaging, where inputs are generally represented by triangular meshes
built upon scanned point sets from an underlying 3D object.

Investigating homotopies from a computational perspective is a
well-studied problem, dating back to the work of Dehn~\cite{d-tkzf-12}
on contractibility of curves, which has strong ties to geometric group
theory. While deciding whether two curves in a $2$-dimensional complex
are homotopic is well-known to be undecidable in general (see for
example Stillwell~\cite{s-ctcgt-80}), when the underlying space is a
surface, efficient, linear-time algorithms have been designed to test
homotopy~\cite{dg-tcs-99,lr-hts-12,ew-tcsr-13}. In this article, we
add a quantitative twist to this problem: the \textsc{Homotopy Height}
problem consists, starting with two disjoint homotopic curves on a
combinatorial surface, of finding the homotopy of minimal height, that is,
where the length of the longest intermediate curve in the homotopy is
minimized.   (We refer the reader to Section~\ref{S:preliminaries} for formal 
definitions.)  The notion of homotopy height has
obvious appeal from a practical perspective, as it quantifies how long
a curve has to be to overcome a hurdle: for example, deciding whether
a bracelet is long enough to slide off over a hand without breaking is essentially the 
question of homotopy height. 
From a computational side, deformations of minimal height
minimize the necessary stretch and  can be used to quantify how
similar curves are, as in map or trajectory analysis.

\subsection{Our results}

We begin by considering two curves forming the boundary of a discrete
annulus, and study the homotopy between these boundaries of minimal
height. Our article leverages on recent results in Riemannian
geometry~\cite{cl-chidhh-14,cr-mhcdrs-16}, and in particular on a
companion article co-authored with Gregory Chambers and Regina
Rotman~\cite{ccmor-mcbd-17} where we prove
that in the Riemannian setting, such an optimal homotopy can be
assumed to be very well behaved. Firstly, it can be assumed to
be an isotopy, so that all the intermediate curves remain
simple. Secondly, this isotopy can be assumed to only move in one
direction and never sweeps any portion of the disk twice; we refer 
to this property as \textbf{monotonicity}, which we will define more precisely in
Section~\ref{S:continuous_discrete}.

These isotopy and monotonicity properties turn out to be a key
ingredient for computational purposes, once we translate those results
to the discretized setting.  First, via some surgery arguments, it
allows us to prove that \textsc{Homotopy Height} is in \textbf{NP}
(Theorem~\ref{T:NP}). The corresponding exponential time algorithm is
to our knowledge the best exact algorithm for \textsc{Homotopy
	Height}. We note that our setting is very general, and also implies
\textbf{NP}-membership for a variant of \textsc{Homotopy Height} in a
more restricted setting that was considered in earlier
papers~\cite{Brightwell09submodularpercolation,homotopyheight,hnss-hwdmml-16},
as well as for 
\textsc{Homotopic Fréchet distance}, where this was
still open despite the recent articles investigating this
distance~\cite{ccellt-hfdbcw-10,hnss-hwdmml-16}.
Then, further surgery arguments allow us to provide an $O(\log
n)$-approximation algorithm for \textsc{Homotopy Height}
(Corollary~\ref{C:logn}), by relying on an earlier $O(\log n)$
approximation-algorithm of Har-Peled, Nayyeri, Salvatipour and
Sidiropoulos~\cite{hnss-hwdmml-16} for homotopy height in a more
restricted setting. Finally, we show that monotonicity directly
implies an equivalence between the \textsc{Homotopy Height} problem
and a seemingly unrelated graph drawing problem which we call
\textsc{Minimal Height Linear Layout}. Therefore, we obtain
that this problem is also in \textbf{NP} and we provide an $O(\log n)$
approximation for it.

\subsection{Related work}

Optimal homotopies (for several definitions of optimal) have been studied extensively in the mathematical community, mainly in Riemannian settings. This literature fits broadly in the setting of quantitative homotopy theory, initially introduced by Gromov~\cite{g-qht-99}, which aims at introducing a quantitative lens in the study of topological invariants on manifolds. Probably the most extensively considered notion of optimality is the study of homotopies minimizing the area swept; see for example~\cite{lawson1980lectures} for an overview of some variants of this problem, or~\cite{white1984} for a discussion of how minimum area homotopies and homologies are connected in higher dimensions.  The notion of controlling the width of a homotopy has also been studied~\cite{width1980,geodesicwidth2013}, and more recent work on minimal height homotopies~\cite{cl-chidhh-14,cr-mhcdrs-16} laid the foundation for our results in this work.  

\begin{figure}\centering
\includegraphics{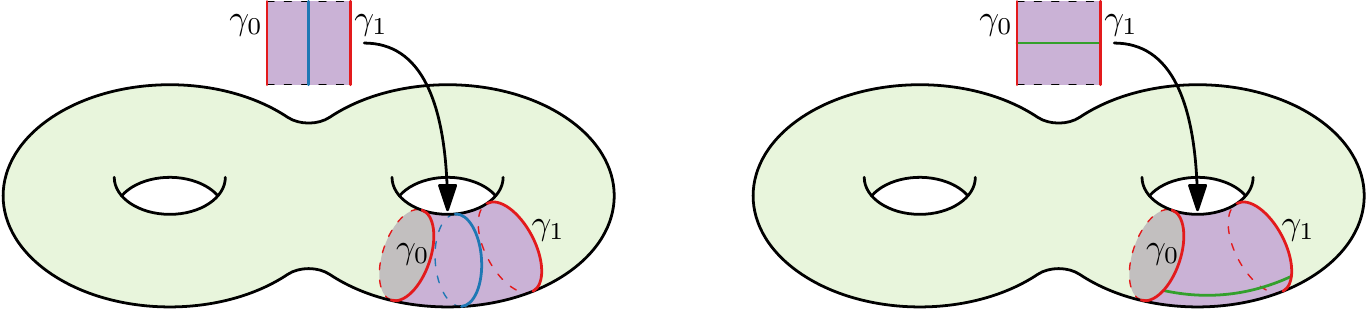}
\caption{Left: the height of a homotopy between homotopic curves~$\gamma_0$ and~$\gamma_1$ measures the maximum amount an intermediate curve must stretch during the homotopy. Homotopies minimizing this amount of stretch measure the homotopy height.
Right: the width of a homotopy measures the maximum length of a ``slice" of the homotopy connecting the two boundary curves. Homotopies minimizing the length of this slice measure the homotopy width, also known as the homotopic Fr\'echet distance.}
\label{fig:widthandheight}
\end{figure}

On the computational side, the rise of Fr\'echet distance for measuring similarity between curves was a prime motivation for the notion of comparing two curves; see for example~\cite{alt2009} for a survey.  Generalizing  Fr\'echet distance to surfaces led to the homotopic  Fr\'echet distance, which is essentially the same as finding a minimum width homotopy given two input cycles on a surface; algorithms are known to calculate this in polynomial time for two input curves in the plane minus a set of obstacles~\cite{CVE08} and to approximate this in the discrete settings where the two curves bound a disk~\cite{hnss-hwdmml-16}.  

More directly, minimum height homotopies have been studied from the computational perspective in various discretized settings~\cite{homotopyheight,hnss-hwdmml-16}, although mainly to discuss the complexity of the problem.  Indeed, as it was not known if the optimal height homotopy was even monotone, the complexity of the problem was completely open.  Since the monotonicity result also holds in more geometric settings~\cite{ccmor-mcbd-17}, a recent paper also examined one natural geometric setting, where the goal is to morph across a polygonal domain in Euclidean space with point obstacles; this work presents a lower bound that is linear in the number of obstacles, as well as a 2-approximation for the arbitrary weight obstacles and an exact polynomial time algorithm when all obstacles have unit weight~\cite{esa-hh2017}.
The same problem also arises as a combinatorics question in lattice theory as a \emph{b-northward migration}, where the authors leave monotonicity of such migrations as an open question~\cite{Brightwell09submodularpercolation}.

\subsection{Relations to graph searching and width parameters}

This work also connects to sweep and search parameters in graph theory; see for example~\cite{f-abggs-08} for a survey of this topic.  In each variant, the game consists of finding the minimum number of searchers needed, where the goal is to find or isolate a hidden fugitive.  For example, in the node searching variant,  the fugitive hides on edges, all of which are originally contaminated, and the searchers clear an edge if two are on its incident vertices.  In this variant, edges can be recontaminated if they are connected to a contaminated edge by a path without searchers, and the game ends when everything is decontaminated.

One key issue in these games is precisely that of monotonicity, or of determining whether in an optimal strategy, edges get recontaminated.  In the node searching variant, monotonicity was established by Lapaugh~\cite{l-rdnhsg-93}, and the argument was simplified by Bienstock and Seymour~\cite{bs-mgs-91}.  One important corollary to monotonicity for these games is that it immediately shows the problem lies in NP, since a strategy can be certified by the list of edges cleared.

Our homotopy problem is quite similar to these graph parameters; sweeping a disk while keeping the length small is intuitively quite similar to blocking in a fugitive.  While our  problem does display minor technical differences with the aforementioned variant -- most notably, our setting is naturally edge-weighted and the cost is measured on the edges and not the vertices -- the key difference is the one of \textit{connectedness}, as node-searching games may allow for disconnected strategies. An important variant of node searching, called \textit{connected node searching}, requires additionally that the decontaminated space remains connected, but makes no restriction on the uncontaminated space.

For graph searching problems, the main argument to establish monotonicity does not maintain connectivity~\cite{bs-mgs-91}, and it was proven that an optimal strategy for connected node searching may indeed be non-monotone~\cite{yda-sglcn-04}. By contrast, Theorem~\ref{T:monotonediscrete} establishes monotonicity of the optimal homotopy in our setting, and the arguments differ radically from the ones of Lapaugh and Bienstock and Seymour. As such, we identify in this paper a new variant of graph searching which is somewhat tractable (i.e., in \textbf{NP}) and introduce a new proof technique to establish monotonicity results.

Finally, when monotonicity is established, graph searching parameters
are very intimately related to width parameters of graphs.  Minimum
cut linear arrangement (also known as cut-width) is closely connected
to the \textsc{Minimum Height Linear Layout} problem, which we show to
be equivalent to \textsc{Homotopy Height}, but the key difference is
that it may break the embedding of the graph. Thus, NP-hardness
reductions for this problem~\cite{MONIEN1988209} do not imply hardness
for our problem.  Connected variants of various width parameters give
rise to \textit{connected pathwidth}~\cite{d-pcp-12} and
\textit{connected treewidth}~\cite{fn-ctcgs-06}, but in contrast to
our homotopies, these parameters are only connected ``on one side'',
which makes them incomparable. We believe that the
``doubly-connected'' aspect of homotopy height makes it a worthwhile
new graph parameter which could offer insights to other parameters in
this area.

\paragraph*{Outline of the paper.} After introducing the preliminaries in Section~\ref{S:preliminaries}, we lay the foundations of this work by explaining the structural theorems we rely on in Section~\ref{S:theorems}.
In Section~\ref{S:lemmas} we establish surgery lemmas based on the idea of \textit{retractions}.
Then, in Section~\ref{S:NP} we prove that \textsc{Homotopy Height} is in \textbf{NP}.
In Section~\ref{S:variants} we draw connections with \textsc{Homotopic Fr\'echet Distance}, and we leverage on these connections to provide an $O(\log n)$-approximation algorithm for \textsc{Homotopy Height}.

%%%%%%%%%%%%%%%%%%
\section{Preliminaries}\label{S:preliminaries}

\paragraph{Homotopy and Isotopy.}
Let~$\Sigma$ be a surface, endowed with a cellularly embedded graph $G$ with $n$ vertices such as in Figure~\ref{F:G}, and let~$\gamma_0$ and~$\gamma_1$ be two simple cycles on~$G$ bounding an annulus.
	\begin{wrapfigure}[12]{r}{0pt}\centering%
		\includegraphics{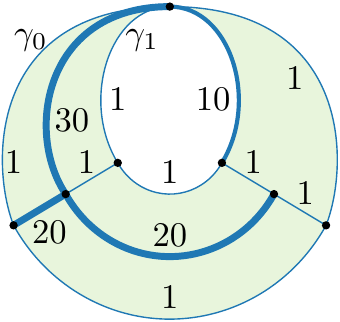}%
		\caption{Example instance~$G$, based on an example in~\cite{Brightwell09submodularpercolation}.}%
		\label{F:G}%
	\end{wrapfigure}
	A \emphdef{discrete homotopy}~$h$ between~$\gamma_0$ and~$\gamma_1$ is a sequence of cycles~$h(t_i)$ with~$0=t_0\leq\dots\leq t_i\leq\dots\leq t_m=1$, with~$h(t_0)=\gamma_0$ and~$h(t_1)=\gamma_1$ and any two consecutive paths~$h(t_i)$ and~$h(t_{i+1})$ are one \emph{move} apart. The intermediate curves $h(t)$ are called \emphdef{level curves} or \emphdef{intermediate curves}.
	A move is either a face-flip, an edge-spike or an edge-unspike (flip, spike or unspike, for short).
	A \emphdef{face-flip} for a face~$F$ replaces a single subpath~$p$ of~$h(t_i)\cap\partial F$ with the path~$\partial F\setminus p$ in~$h(t_{i+1})$.
	An \emphdef{edge-spike} for an edge~$u\to v$ replaces a single occurrence of a vertex~$u\in h(t_i)$ by the path~$u\to v\to u$ consisting of two mirrored copies of that edge in~$h(t_{i+1})$.
	Symmetrically, an \emphdef{edge-unspike} replaces a path~$u\to v\to u$ of~$h(t_i)$ by the single vertex~$u$ in~$h(t_{i+1})$.
	The~\emphdef{length}~$\ell(h(i))$ of a path~$h(i)$ is the sum of the weights of its edges (with multiplicity).
	The~\emphdef{height} of a homotopy~$h$ is the length of the longest path~$h(t_i)$.
	An~\emphdef{optimal homotopy} is one that minimizes the height.
	The~\emphdef{homotopy height} between $\gamma_0$ and $\gamma_1$ is the height of an optimal homotopy between $\gamma_0$ and $\gamma_1$.
	Figure~\ref{F:Gh} illustrates an optimal homotopy that uses only face-flips for the instance of Figure~\ref{F:G}.

	\begin{figure}[ht]\centering%
		\includegraphics{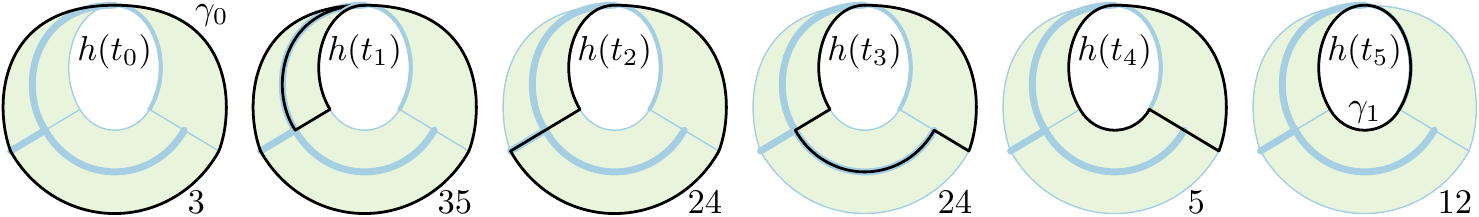}%
		\caption{An optimal homotopy~$h$ of height~$35$ for the instance of~Figure~\ref{F:G}.}%
		\label{F:Gh}%
		\vspace{-1em}
	\end{figure}

\noindent\paragraph{Cross Metric Surfaces.}
	\begin{wrapfigure}[12]{r}{0pt}\centering%
		\includegraphics{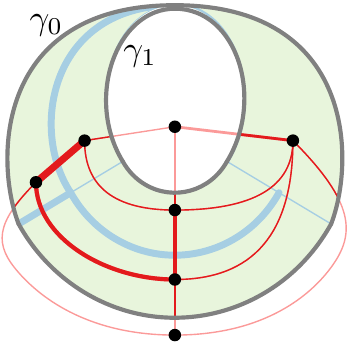}%
		\caption{Dual representation of Figure~\ref{F:G}.}%
		\label{F:dualG}%
	\end{wrapfigure}
		For most purposes, it is more convenient to think of this discrete model in a dual way, relying on the \textit{cross-metric surfaces}~\cite{ce-tnpcs-10} which are becoming increasingly used in the computational geometry and topology literature.
		In this dual setting, a cross-metric surface is a surface $\Sigma$ endowed with a weighted (dual) graph~$G^*$.

		Assuming the primal surface is connected, we obtain this dual graph by gluing a disk to each boundary component, taking the dual graph, and puncturing the vertices corresponding to the added disks, without removing the adjacent edges.
		Such that these (dual) edges end at the boundary of the cross-metric surface instead of at a vertex, see Figure~\ref{F:dualG}.

		For a curve~$\gamma$ on~$\Sigma$ with a finite number of crossings with~$G^*$, its length~$\ell(\gamma)$ is the weighted sum of the crossings~$\gamma \cap G^*$.
		Now, a homotopy between $\gamma_0$ and $\gamma_1$ is a homotopy in the usual sense, that is, 
		a continuous map $h \colon S^1 \times [0,1] \rightarrow \Sigma$ such that $h(\cdot,0)=\gamma_0$ and $h(\cdot,1)=\gamma_1$, 
		except that we require that the values of $t$ for which $h(\cdot,t)$ is not in general position with $G^*$ are isolated, 
		and each such curve has at most one such degeneracy%
		\footnote{Any homotopy can be made so by a small perturbation without increasing the height, so we always consider this hypothesis fulfilled in the remainder of the article.}%
		~$h(x,t)$ with~$G^*$.
		As before, the height of a homotopy is defined as the maximal length of an intermediate non-degenerate level curve~$h(t)$.
		A homotopy is an \textit{isotopy} if all the intermediate curves are simple.
		
	\begin{wrapfigure}{r}{0pt}\centering%
		\includegraphics{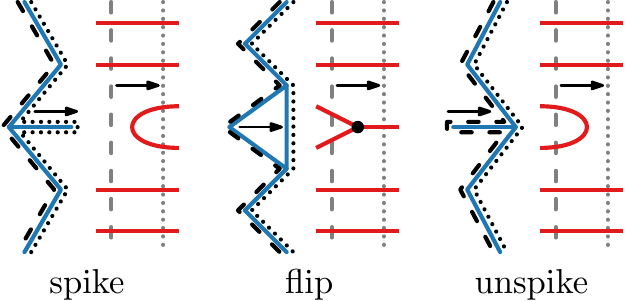}%
		\caption{Three moves in the primal (left) and dual (right) representation.}%
		\label{F:dualMoves}%
	\end{wrapfigure}
		Given a homotopy~$h^*$ in this setting, we obtain a discrete homotopy~$h$ on the primal graph~$G$ on~$\Sigma$ as follows.
		Pick a curve~$h^*(t_i)$ in each maximal interval of non-degenerate curves in~$h^*$ (all curves in such interval have the same crossing pattern with~$G^*$, and therefore the same length).
		Let~$h(t_i)$ be the curve on~$G$ whose sequence of vertices and edges corresponds to the sequence of faces and edges of~$G^*$ visited by~$h^*(t_i)$.
		This model is dual to the previous one, and Figure~\ref{F:dualMoves} illustrates how any move (flip, spike or unspike) connects two intermediate curves~$h(t_i)$ and~$h(t_{i+1})$.
		We say a discrete homotopy is an isotopy if it can be obtained from an isotopy in the dual setting.

%%%%%%%%%%%%%%%%%
\section{Isotopies and monotonicity of optimal homotopies}\label{S:theorems}
\label{S:continuous_discrete}

We begin by restating and explaining the two structural results that we will rely on. Introducing the relevant Riemannian background lies outside of the scope of this paper, so we will simply advise the uninitiated reader to picture a Riemannian surface as a surface embedded into $\mathbb{R}^3$, where the metric on the surface is the one induced by the usual Euclidean metric of $\mathbb{R}^3$. Thanks to the Nash-Kuiper embedding theorem (see~\cite{hh-ierme-06}), this naive idea looses no generality. We refer to standard textbooks on the subject for more proper background on Riemannian geometry, for example do Carmo~\cite{c-rg-92}.

The first theorem shows that up to an arbitrarily small additive factor, the homotopy of minimal height between two simple closed curves can be assumed to be an isotopy.

\begin{theorem}[{\cite[Theorem~1.1]{cl-chidhh-14}}]\label{T:isotopy}
	Let $\Sigma$ be two-dimensional Riemannian manifold with or without boundary, and let $\gamma_0$ and $\gamma_1$ be two non-contractible simple closed curves which are homotopic through curves bounded in length by $L$ via a homotopy $\gamma$.

	Then for any $\varepsilon>0$, there is an isotopy $\widetilde{\gamma}$ from $\gamma_0$ to $\gamma_1$ through curves of length at most $L+\varepsilon$.
\end{theorem}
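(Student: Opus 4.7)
The plan is to reduce to a generic smooth homotopy and then systematically resolve self-intersections by local surgery, distributing a small length budget across the (finitely many) surgery events.

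First, I would approximate $\gamma$ by a smooth one-parameter family $\gamma_s$, $s\in[0,1]$, at a cost of at most $\varepsilon/3$ in the length of each intermediate curve, so that: (i) each curve $\gamma_s$ is an immersed closed curve with only transverse double-point self-intersections, and (ii) the combinatorial picture changes only at finitely many critical times corresponding to standard codimension-one events (Reidemeister-II births/deaths of bigons, Reidemeister-III triple-point passages, and, in the case of boundary, tangencies with $\partial\Sigma$). This is a standard transversality argument.

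Next, at a fixed time $s$, I would try to replace the possibly non-simple curve $\gamma_s$ by a simple one of comparable length. At a transverse double point $p$ of an immersed closed curve $\alpha$ on a surface, the curve admits two local resolutions, and by iterating the resolution that splits $\alpha$ into a disjoint union of subloops one obtains simple closed curves $\alpha_1,\ldots,\alpha_k$ with $\sum_i \ell(\alpha_i)=\ell(\alpha)$. Exactly one of these components is freely homotopic to $\gamma_0$ (the others bound disks in $\Sigma$, after using that $\gamma_0,\gamma_1$ are simple and homotopic through the annular image of $\gamma$), so I would keep that component as the new simple representative at time $s$; its length is bounded by $L+\varepsilon/3$.

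The delicate step is interpolation between consecutive critical times to turn this pointwise replacement into an actual isotopy. Between critical times the chosen simple components vary by ambient isotopy, so the real work is at each critical time $s^*$: I would insert a short ``bypass'' isotopy that pushes a small bigon (or a triangle, for Reidemeister-III) across the disk it bounds. The main obstacle, which I expect to be where almost all the technical care goes, is controlling the length during this bypass: the disk being pushed across can in principle be long and thin, so collapsing it through arbitrary foliations could temporarily create long curves. To control this, I would use a co-area style averaging argument inside the disk to select a foliation by short arcs, then distribute the additive budget as $\varepsilon/(3N)$ per critical time, where $N$ is the finite number of critical times of the generic homotopy. This forces the bypass isotopy at each $s^*$ to use curves of length at most $L+\varepsilon/3+\varepsilon/(3N)\cdot N=L+2\varepsilon/3$.

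Concatenating the pieces yields an isotopy $\widetilde{\gamma}$ through simple closed curves between $\gamma_0$ and $\gamma_1$ whose maximum length is bounded by $L+\varepsilon$. The two places I would scrutinize most carefully are the non-generic transitions and the choice of foliation in the bypass step, since both directly control whether the final additive error stays below $\varepsilon$ rather than blowing up with the combinatorial complexity of the original homotopy.
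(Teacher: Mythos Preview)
The paper does not contain a proof of this theorem: it is quoted verbatim from Chambers and Liokumovich~\cite{cl-chidhh-14}, and only the discrete analogue (Theorem~\ref{T:isotopydiscrete}) is discussed, with the remark that ``the proof is exactly the same'' minus the $\varepsilon$-slack. So there is no in-paper argument to compare your proposal against.

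That said, a quick comment on your outline. The overall shape---make the homotopy generic, resolve crossings pointwise to obtain simple representatives, then bridge across the finitely many singular times---is indeed the strategy of Chambers--Liokumovich. The step you should be most careful with is your claim that after smoothing all self-intersections of $\gamma_s$, ``exactly one of these components is freely homotopic to $\gamma_0$ (the others bound disks in $\Sigma$).'' On a general surface this is not automatic: a smoothing of a curve in the class of a simple $\gamma_0$ can in principle produce two essential simple curves whose product is $[\gamma_0]$, neither of which is individually homotopic to $\gamma_0$, and certainly neither of which need bound a disk. The actual argument handles this more delicately, using that the smoothings can be chosen coherently (one is working with a \emph{specific} generic homotopy, not an arbitrary immersed representative) and tracking how components evolve across the Reidemeister events; the ``annular image of $\gamma$'' heuristic you invoke does not by itself do the job, since the image of a homotopy of closed curves on a higher-genus surface need not be an annulus. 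Your bypass/averaging step is the right place to spend the $\varepsilon$, and your instinct that this is where the technical care lives is correct.
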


\noindent\textbf{Remark.} The non-contractibility hypothesis is required because if $M$ is not a sphere, contractible cycles with opposite orientations are homotopic but not isotopic. However, if we disregard the orientations, the result holds in full generality.
\smallskip

This theorem has the following discrete analogue:

\begin{restatable}{theorem}{Tisotopydiscrete}\label{T:isotopydiscrete}
	Let $(\Sigma,G^*)$ be a cross-metric surface, and let $\gamma_0$ and $\gamma_1$ be two non-contractible simple closed curves on $(\Sigma,G^*)$ which are homotopic through curves bounded in length by $L$ via a homotopy $\gamma$.

	Then there is an isotopy $\widetilde{\gamma}$ from $\gamma_0$ to $\gamma_1$ through curves of length at most $L$.
\end{restatable}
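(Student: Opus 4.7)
The plan is to reduce Theorem~\ref{T:isotopydiscrete} to its Riemannian counterpart Theorem~\ref{T:isotopy}, by constructing for each $\varepsilon>0$ a Riemannian metric $g_\varepsilon$ on $\Sigma$ that is compatible with the cross-metric structure in the sense that the Riemannian length of transverse curves closely matches their cross-metric length.

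First I would construct $g_\varepsilon$ as follows: thicken each dual edge $e$ of weight $w(e)$ into a strip in which the transverse direction has length exactly $w(e)$, and make the tangential direction along $e$, the face interiors, the vertex neighborhoods, and the neighborhoods of the boundary punctures all have diameter at most $\varepsilon/n^2$. Two comparison statements then follow. First, any curve transverse to $G^*$ has $g_\varepsilon$-length at least its cross-metric length, since each transverse crossing of $e$ must cross a strip of transverse length $w(e)$. Second, any transverse curve admits a homotopic representative, obtained by straightening it so that it crosses each edge orthogonally and takes shortest paths inside each face, whose $g_\varepsilon$-length exceeds its cross-metric length by at most $\varepsilon$, since the number of face/edge segments is at most polynomial in $n$ and each such segment contributes at most $\varepsilon/n^2$.

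Next, I would apply this straightening construction pointwise in $t$ to the input cross-metric homotopy $\gamma$, producing a continuous Riemannian homotopy of $g_\varepsilon$-height at most $L+\varepsilon$; the non-contractibility hypothesis transfers directly since the underlying topological surface is unchanged. Applying Theorem~\ref{T:isotopy} yields an isotopy $\widetilde{\gamma}_\varepsilon$ of $g_\varepsilon$-height at most $L+2\varepsilon$. A standard generic perturbation then makes $\widetilde{\gamma}_\varepsilon$ transverse to $G^*$ at all but finitely many times, each of which exhibits a single flip, spike, or unspike, and changes $g_\varepsilon$-lengths by at most an additional $\varepsilon$. By the first comparison inequality, the resulting cross-metric isotopy has cross-metric height at most $L+3\varepsilon$. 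Finally, since the cross-metric length of any simple closed curve is a non-negative integer combination of the finitely many edge weights $w(e)$, and bounded-length simple curves have a bounded number of crossings with $G^*$, the set of achievable cross-metric lengths in $[0,L+1]$ is finite; taking $\varepsilon$ smaller than one third of the gap above $L$ in this set shows that $\widetilde{\gamma}_\varepsilon$ already has cross-metric height at most $L$, which is the sharp bound required by the statement (no additive $\varepsilon$).

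The main obstacle I anticipate is the middle step, namely ensuring that the straightening of the given cross-metric homotopy can be carried out continuously in $t$ across the discrete events (face-flips, spikes, unspikes). Away from degenerate times, the crossing pattern is fixed and the straightened representative depends continuously on entry and exit points; at a degenerate time, however, the straightened representative must interpolate continuously through a local change of the crossing pattern. This requires a short case analysis of the three move types and a verification that no length jump larger than $\varepsilon$ occurs across the degeneracy. The transversality perturbation and the post-processing to reduce to single-degeneracy slices are comparatively standard, but still require checking in this cross-metric setting.
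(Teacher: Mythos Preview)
Your approach is genuinely different from the paper's. The paper does \emph{not} reduce to Theorem~\ref{T:isotopy}; instead it simply rereads the Chambers--Liokumovich proof in the cross-metric setting and observes that the only place the $\varepsilon$-slack enters the Riemannian argument is when one perturbs curves to make them simple. In the cross-metric model such perturbations take place in the complement of $G^*$ and therefore cost nothing, so the slack disappears automatically. This is a one-line observation once you are willing to open up the proof of Theorem~\ref{T:isotopy}.

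Your route---build an approximating Riemannian metric, apply Theorem~\ref{T:isotopy} as a black box, then use discreteness of the cross-metric length spectrum to eliminate the $\varepsilon$---is viable in principle and has the virtue of treating the Riemannian result as a black box. But several of the intermediate claims need repair. First, ``the number of face/edge segments is at most polynomial in $n$'' is false: an intermediate curve of the given homotopy may cross $G^*$ arbitrarily many times (this is not bounded by $n$ alone, and with zero-weight edges not even by $L$); you must instead scale the ``tiny'' parts of $g_\varepsilon$ by the actual maximal crossing number of the input homotopy, which is finite but input-dependent. Second, your discreteness claim should be stated without the spurious reference to simplicity: what you actually use is that the set of values $\sum_e n_e\,w(e)$ in $[0,L+1]$ is finite, which follows because zero-weight edges contribute nothing and the remaining coefficients are bounded. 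Third, zero-weight edges force you to give their strips positive transverse $g_\varepsilon$-length, which keeps the first comparison inequality but makes the straightening bound depend again on the crossing count rather than $n$. None of these is fatal, but they make your argument heavier than the paper's, and the continuous-straightening step you flag is indeed the main piece of real work you would still owe.
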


The proof is exactly the same as the one of Theorem~\ref{T:isotopy}, except that it does not need the $\varepsilon$-slack: this was required to slightly perturb the curves so that they are simple but in the discrete setting it can be done with no overhead.

The second theorem shows that, \textit{when the starting and finishing curves of a homotopy are the boundaries of the manifold}, there exists an optimal homotopy that is monotone, i.e., that never backtracks, once again up to an arbitrarily small additive factor. Formally, if $\gamma$ is an isotopy (which we can assume the optimal homotopy to be, by Theorem~\ref{T:isotopy}) between $\gamma_0$ and $\gamma_1$, for $0\leq t\leq 1$, the curves $\gamma_t$ and $\gamma_1$ bound an annulus $A_t$. Then the isotopy $\gamma$ is \emphdef{monotone} if for $t<t'<1$, $\gamma_{t'}$ is contained in $A_t$.

\begin{theorem}[{\cite[Theorem~1.2 and the following paragraph]{ccmor-mcbd-17}}]\label{T:monotone}
	Let $M$ be a Riemannian annulus with boundaries $\gamma_0$ and $\gamma_1$ such that there exists a homotopy between $\gamma_0$ and $\gamma_1$ of height less than $L$.

	Then there exists a \textit{monotone} homotopy between $\gamma_0$ and $\gamma_1$ of height less than $L$.
\end{theorem}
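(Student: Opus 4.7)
The plan is to start from the homotopy guaranteed by Theorem~\ref{T:isotopy}, and then ``monotonize'' it by replacing each intermediate curve with the boundary of the portion of $M$ that has not yet been permanently swept. Since the given homotopy has height strictly less than $L$, there is some $\delta>0$ such that its height is in fact at most $L-\delta$; applying Theorem~\ref{T:isotopy} with $\varepsilon=\delta/2$ yields an isotopy $(\gamma_t)_{t\in[0,1]}$ from $\gamma_0$ to $\gamma_1$ through curves of length strictly less than $L$. (The non-contractibility hypothesis of Theorem~\ref{T:isotopy} is available because the boundaries of an annulus are non-contractible.)

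For each $t\in[0,1]$, the simple curve $\gamma_t$ together with $\gamma_1$ bounds a sub-annulus $A_t\subseteq M$, with $A_0=M$ and $A_1$ degenerating to $\gamma_1$. Define the ``minimal remaining annulus''
\[\widetilde{A}_t \;:=\; \bigcap_{s\le t} A_s.\]
After a small generic perturbation of $\gamma$ ensuring that the topology of $A_s$ changes at only finitely many critical times $s$, I would argue that the connected component of $\widetilde{A}_t$ containing $\gamma_1$ is itself a topological annulus whose outer boundary $\widetilde{\gamma}_t$ is a simple closed curve depending continuously on $t$, with $\widetilde{\gamma}_0=\gamma_0$ and $\widetilde{\gamma}_1=\gamma_1$. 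Monotonicity is then automatic from the construction: the family $(\widetilde{A}_t)_t$ is decreasing in $t$, so for every $t<t'<1$ the curve $\widetilde{\gamma}_{t'}$ lies in $\widetilde{A}_{t'}\subseteq\widetilde{A}_t$.

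The main obstacle is to prove that $\ell(\widetilde{\gamma}_t)<L$. Unlike any single $\gamma_s$, the curve $\widetilde{\gamma}_t$ is a patchwork of arcs drawn from various $\gamma_s$ with $s\le t$: each arc corresponds to the last time the corresponding piece of $M$ was ``trimmed off'' the still-unswept region. The difficulty is to argue that this gluing cannot inflate the total length beyond $\sup_s\ell(\gamma_s)$. My plan is to identify the seam points along $\widetilde{\gamma}_t$ where the source index $s$ jumps (which, after regularization, form a finite set) and, at each seam, perform a local surgery: near a seam between arcs coming from times $s_1<s_2$, one would use the continuity of the isotopy and a disk-filling argument in $A_{s_1}\setminus A_{s_2}$ to slide both arcs onto a single nearby $\gamma_{s^\ast}$ without length increase. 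Iterating this surgery over all seams would yield $\ell(\widetilde{\gamma}_t)\le\sup_s\ell(\gamma_s)<L$.

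I expect the delicate part to be the analysis at critical values of the ``last-visit'' function $M\to[0,1]$, where $\widetilde{A}_t$ may degenerate (for instance, the outer boundary could momentarily pinch off a sub-disk) and where the seam structure is not locally constant. Handling these critical events carefully, and checking that the surgery can be performed simultaneously across all seams in a way compatible with the continuous dependence on $t$, is the technical heart of the argument and the step I would expect to consume most of the proof.
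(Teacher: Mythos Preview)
First, a framing point: the paper does not prove this theorem at all---it is quoted from the companion paper~\cite{ccmor-mcbd-17}, and the present paper only uses it (and its discrete analogue) as a black box. So there is no ``paper's own proof'' to compare against here; one can only ask whether your proposed argument actually works.

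It does not, and the gap is precisely at the step you flag as ``the main obstacle''. Your construction $\widetilde{\gamma}_t=\partial\bigl(\bigcap_{s\le t}A_s\bigr)$ is the natural first thing to try, and it is automatically monotone, but it simply does not satisfy the length bound, and no local seam surgery can rescue it. Consider an isotopy that first pushes a long thin tongue $R_1$ into the annulus on the left (so some $\gamma_{s_1}$ wraps around $R_1$), then retreats back to $\gamma_0$, then pushes a disjoint tongue $R_2$ on the right (so some $\gamma_{s_2}$ wraps around $R_2$), then retreats again. Every individual $\gamma_s$ wraps around at most one tongue, so $\sup_s\ell(\gamma_s)\approx\ell(\gamma_0)+\max(\ell_1,\ell_2)$, where $\ell_i$ is the extra length contributed by tongue $R_i$. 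But after both excursions, $\widetilde{A}_t$ has lost both tongues, and $\widetilde{\gamma}_t$ wraps around both simultaneously, with length roughly $\ell(\gamma_0)+\ell_1+\ell_2$. With more tongues this discrepancy becomes arbitrarily large. Your seam surgery cannot help: sliding an arc of $\widetilde{\gamma}_t$ onto some single $\gamma_{s^\ast}$ would pull it off one of the tongues and back into the interior of $\widetilde{A}_t$, destroying exactly the monotonicity you built the construction to obtain. In short, the ``envelope'' homotopy is monotone but can be too long, and shortening it breaks monotonicity; you cannot have both for free.

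This is not an accident of the example but the heart of the difficulty. The paper itself points to this in the discussion of graph searching: the analogous ``keep track of everything ever cleared'' monotonization is known to fail for connected node searching~\cite{yda-sglcn-04}, and the authors stress that the argument in~\cite{ccmor-mcbd-17} ``differ[s] radically'' from the classical monotonization proofs. The actual proof in the companion paper does not monotonize a given isotopy by an envelope construction; it proceeds by a more delicate min--max/curve-shortening style analysis that identifies and eliminates the non-monotone portions one at a time while keeping explicit control of the length at every stage. Your proposal would need an entirely different mechanism for the length bound to become a proof.
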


Note that the $\varepsilon$-slack of Theorem~\ref{T:isotopy} is also present here but is hidden in the open upper bound of $L$. In this theorem, as was observed by Chambers and Rotman~\cite{cr-mhcdrs-16}, crediting Liokumovitch, the hypothesis that the manifold is entirely comprised between both curves is necessary: see~\cite[Figure~5]{cr-mhcdrs-16} for a counter-example.

In the discrete setting, the corresponding result is the following, where the definition of monotonicity is the same:

\begin{restatable}{theorem}{Tmonotonediscrete}\label{T:monotonediscrete}
	Let $(\Sigma,G^*)$ be a cross-metric annulus with boundaries $\gamma_0$ and $\gamma_1$ such that there exists a homotopy between $\gamma_0$ and $\gamma_1$ of height $L$.

	Then there exists a \textit{monotone} isotopy between $\gamma_0$ and $\gamma_1$ of height $L$.
\end{restatable}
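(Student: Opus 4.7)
My plan is to deduce the discrete statement from its continuous counterpart (Theorem~\ref{T:monotone}) by passing through a suitably chosen Riemannian structure on $\Sigma$. First, by Theorem~\ref{T:isotopydiscrete}, I may replace the given discrete homotopy by a discrete isotopy between $\gamma_0$ and $\gamma_1$ of height $L$, so the only remaining task is to guarantee monotonicity without increasing the height.

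Next, for any $\delta > 0$, I would construct a Riemannian metric $g_\delta$ on the annulus $\Sigma$ (preserving $\gamma_0$ and $\gamma_1$ as its boundary components) that faithfully reflects the cross-metric structure of $G^*$: each dual edge $e$ of weight $w_e$ is realized so that a transverse crossing has Riemannian length exactly $w_e$, with the orthogonal ``parallel'' direction contributing at most $O(\delta)$; each dual vertex becomes a small disk of diameter $O(\delta)$; and the interior of each face of $G^*$ contributes $O(\delta)$ to any traversing curve. The key quantitative properties are that (i) the Riemannian length of any continuous curve in general position with $G^*$ equals its cross-metric length up to an additive $O(n\delta)$ error, and (ii) any discrete (iso)topy in the cross-metric lifts, by smoothing and interpolation between consecutive level curves, to a continuous isotopy in $(\Sigma, g_\delta)$ whose Riemannian height exceeds the original by at most $O(n\delta)$. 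In particular, the isotopy from the first step lifts to a continuous isotopy of Riemannian height at most $L + O(n\delta)$.

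For any $\varepsilon > 0$, applying Theorem~\ref{T:monotone} with threshold $L + O(n\delta) + \varepsilon$ then yields a continuous monotone isotopy of Riemannian height strictly less than $L + O(n\delta) + \varepsilon$. I would discretize this back to the cross-metric by tracking its combinatorial crossing pattern with $G^*$ as $t$ increases: each transition between consecutive patterns corresponds to exactly one of the three moves depicted in Figure~\ref{F:dualMoves}. Continuous monotonicity (nesting of the sub-annuli bounded by successive level curves and $\gamma_1$) transfers directly to the discrete notion of monotonicity, and by property~(i) the cross-metric height of the resulting discrete monotone isotopy is at most $L + O(n\delta) + \varepsilon$.

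The main obstacle is to close the remaining gap and achieve height at most $L$ exactly. I would handle this by a finiteness argument: monotonicity forces each face of $\Sigma$ to be swept essentially once, so the combinatorial structure of a monotone discrete isotopy between $\gamma_0$ and $\gamma_1$ is encoded by bounded data, and only finitely many such isotopies exist up to trivial equivalence. The corresponding set of heights is therefore a finite subset of $\mathbb{R}$, and choosing $\delta$ and $\varepsilon$ smaller than its spacing near $L$ forces the construction above to output a monotone isotopy of height at most $L$. The most delicate technical point to verify is that the discretization step really does preserve monotonicity: the concern is that the smoothing used to lift and re-discretize might, a priori, introduce spike-unspike moves that temporarily revisit a region already swept; I expect this to be ruled out by performing the discretization on a sufficiently fine subdivision of the time parameter so that each combinatorial transition occurs on a genuinely advancing strip of the annulus.
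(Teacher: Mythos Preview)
Your approach is genuinely different from the paper's. The paper does not reduce to the Riemannian theorem at all: it simply asserts that the proof of Theorem~\ref{T:monotone} from the companion paper~\cite{ccmor-mcbd-17} can be carried out \emph{verbatim} in the cross-metric setting, and that in this setting the perturbations responsible for the $\varepsilon$-slack are unnecessary, so one gets height exactly~$L$ directly. In other words, the paper re-runs the Riemannian argument on the combinatorial object; you instead try to use Theorem~\ref{T:monotone} as a black box via a Riemannian approximation and then remove the slack by a discreteness argument. The paper even contains a Remark observing that the \emph{reverse} reduction (discrete $\Rightarrow$ Riemannian) works, so your style of argument is natural, but it is not what the authors do here.

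Your proposal has a real gap in the finiteness step. You claim that ``monotonicity forces each face of $\Sigma$ to be swept essentially once, so the combinatorial structure of a monotone discrete isotopy \ldots\ is encoded by bounded data, and only finitely many such isotopies exist.'' The first clause is correct (this is Lemma~\ref{L:polynomialfaceflips}), but the inference is not: a monotone isotopy can contain an unbounded number of spike/unspike moves, so there is no a priori bound on its combinatorial complexity. Indeed, the entire point of Section~\ref{S:NP} (the reduced isotopies, Lemmas~\ref{L:betweenSpikes} and~\ref{L:shortSpikedPaths}) is to work hard to obtain such a bound, and that work comes \emph{after} Theorem~\ref{T:monotonediscrete}, so you cannot invoke it here. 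What you actually need is weaker and provable directly: the \emph{height} of any discrete homotopy lies in the set $\{\sum_e n_e w_e : n_e\in\mathbb{Z}_{\geq 0}\}$, which meets any bounded interval in finitely many points (bound $n_e\le (L{+}1)/w_e$ for each edge of positive weight, and zero-weight edges do not affect the length). This discreteness of the value set, not finiteness of the isotopies themselves, is what lets you pass from height $\le L+O(n\delta)+\varepsilon$ for all small $\delta,\varepsilon$ to height $\le L$.

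With that fix your route can presumably be completed, though you should also be explicit that the monotone Riemannian isotopy produced by Theorem~\ref{T:monotone} can be perturbed into general position with~$G^*$ while preserving monotonicity and increasing the height by at most~$\varepsilon$; this is where the concern you flag actually lives, and it is not automatic. The paper's direct approach avoids both of these issues entirely, which is why it obtains the clean statement with no slack and no auxiliary metric.
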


The proof is exactly identical to the one in the Riemannian
settting and it yields a slightly stronger result, since the cross-metric setting removes
the need for perturbations and thus the need of an
$\varepsilon$-slack.

\textbf{Remark:} Let us observe that the discrete theorems are in some way more general than the Riemannian ones: not only do they bypass the need for some $\varepsilon$-slack, but they also directly imply their Riemannian converses by the following reduction. Starting with a Riemannian surface, and a (non-monotone) isotopy between two disjoint curves, one can find a triangulation of the surface allowing, at an $\varepsilon$-cost, to approximate the isotopy using only elementary moves. Then, after making this isotopy monotone in the discrete setting, one can translate it back into a monotone isotopy in the Riemannian setting by interpolating between the face and edge moves.

\section{Retractions and pausing at short cycles}\label{S:lemmas}

	In this section, we establish several technical lemmas which are necessary for our proofs in the next section.
	For simple closed curves~$\beta$ and~$\gamma$ bounding an annulus, denote that annulus by~$A(\beta,\gamma)$.
	Let~$\Sh(\beta,\gamma)$ be the set of closed curves in~$A(\beta,\gamma)$ homotopic to boundaries~$\beta$ and~$\gamma$, that do not intersect homotopic curves of shorter length.
	Then, for any point~$p\in\alpha\in\Sh(\beta,\gamma)$,~$\alpha$ is a shortest closed path through~$p$ in its homotopy class.
	Let~$\G(\beta,\gamma)$ be the set of minimum length simple closed curves homotopic to the boundaries of~$A(\beta,\gamma)$, then~$\G(\beta,\gamma)\subseteq\Sh(\beta,\gamma)$.

	We now introduce the concept of a retraction of a homotopy,
				which gives a way to shortcut a homotopy at a given curve,
				provided it is a curve of $\Sh(\beta,\gamma)$. This idea is implicit in Chambers and Rotman~\cite[Proof of
					Theorem~0.7]{cr-mhcdrs-16}, and we refer to their article
				for more details. For a monotone isotopy~$h$ between
				boundaries of an annulus~$A$, and a homotopic
				annulus~$A'\subset A$, define the
				\emphdef{retraction}~$\ret{h}{A'}(t)$ of~$h(t)$ to~$A'$ as the
				same curve with each arc of~$h(t)\setminus A'$ replaced by the
				shortest homotopic path along the boundary of~$A'$.  Although
				paths along~$\partial A'$ (dis)appear discontinuously as~$t$
				varies,~$\ret{h}{A'}$ can be obtained in the form of a
				discrete homotopy by (un)spiking these paths as they
				(dis)appear.  The resulting homotopy~$\ret{h}{A'}$ is a
				monotone isotopy.

	\begin{lemma}\label{L:semiGeodesic}
		If~$\alpha\in\Sh(\alpha,\gamma)$ and~$A(\alpha,\gamma)\subseteq A(\beta,\gamma)$, and~$h$ is a monotone isotopy from~$\beta$ to~$\gamma$ of height~$L$, then~$\ret{h}{A(\beta,\alpha)}$ is a monotone isotopy from~$\beta$ to~$\alpha$ with height at most~$L$.
	\end{lemma}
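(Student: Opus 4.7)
My plan is first to verify that the retraction $\ret{h}{A(\beta,\alpha)}$ really is a monotone isotopy from $\beta$ to $\alpha$, and then to concentrate the actual work on the height estimate. The first point is essentially by construction: the retraction procedure reviewed before the lemma---replacing excursions outside $A(\beta,\alpha)$ by shortest boundary paths, realized by (un)spiking---produces simple curves, inherits monotonicity from $h$, starts at $\beta$ (which is untouched, being on $\partial A(\beta,\alpha)$), and, as $h$ sweeps past $\alpha$ into $A(\alpha,\gamma)$, eventually collapses onto $\alpha$.

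For the height bound I would fix a generic time $t$ and decompose $h(t)$, along its transverse intersections with $\alpha$, into maximal subarcs lying in $A(\beta,\alpha)$ or in $A(\alpha,\gamma)$. The retraction leaves the former untouched and, for each subarc $a$ of the latter type with endpoints $p,q \in \alpha$, substitutes the shortest path $\sigma$ along $\alpha$ from $p$ to $q$ in the same rel-endpoints homotopy class as $a$ inside $A(\alpha,\gamma)$. Summing, it suffices to show $\ell(\sigma) \le \ell(a)$ for every such pair. Suppose by contradiction that $\ell(\sigma) > \ell(a)$, and form the closed curve $\alpha' := (\alpha\setminus\sigma)\cup a$ by cutting $\sigma$ out of $\alpha$ and splicing in $a$. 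Since $a \simeq \sigma$ rel endpoints in $A(\alpha,\gamma)$, the curve $\alpha'$ is homotopic to $\alpha$ there, and $\ell(\alpha') = \ell(\alpha) - \ell(\sigma) + \ell(a) < \ell(\alpha)$; however $\alpha' \cap \alpha \supseteq \{p,q\}$, directly contradicting the defining property of $\alpha \in \Sh(\alpha,\gamma)$ that $\alpha$ not intersect any shorter homotopic curve.

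The main obstacle I anticipate is handling two residual cases. First, when $h(t)$ lies entirely in $A(\alpha,\gamma)\setminus\alpha$ the retraction replaces the whole closed curve by $\alpha$, so one also needs $\ell(\alpha) \le L$; this I would obtain by a continuity argument, using that the monotone sweep from $\beta$ to $\gamma$ must traverse some curve $h(t^*)$ actually meeting $\alpha$, for which the $\Sh$-property gives $\ell(\alpha) \le \ell(h(t^*)) \le L$. Second, one has to check that the splicing argument still goes through when a subarc $a$ winds nontrivially around the annulus, so that $\sigma$ wraps along $\alpha$; here the simplicity of $h(t)$ bounds the possible winding, and the splicing nevertheless produces a closed curve $\alpha'$ in the homotopy class of $\alpha$, keeping the contradiction with $\Sh(\alpha,\gamma)$ intact.
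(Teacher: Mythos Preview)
Your proof is correct and follows the same approach as the paper: split according to whether $h(t)$ still meets $A(\beta,\alpha)$, use the $\Sh$-property arc by arc in the first case, and bound $\ell(\alpha)$ via the last level curve meeting $\alpha$ in the second. Your splicing contradiction spells out exactly what the paper compresses into the bare assertion ``$|b|\le|a|$'', and your winding discussion is additional caution the paper omits.
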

	\begin{proof}
		The retraction~$h'=\ret{h}{A(\beta,\alpha)}$ is a monotone isotopy from~$h'(0)=\beta$ to~$h'(1)=\alpha$.
		Let~$t'$ be the maximum~$t$ for which~$h(t)$ intersects~$A(\beta,\alpha)$.
		For~$t\geq t'$, we have~$h'(t)=\alpha$ and therefore~$|h'(t)|=|\alpha|\leq|h(t')|\leq L$.
		For~$t\leq t'$, each arc~$a$ of~$h(t)\setminus A(\beta,\alpha)$ is replaced in~$h'(t)$ by a homotopic path~$b$ along~$\alpha$ with~$|b|\leq|a|$, and thus~$|h'(t)|\leq|h(t)|\leq L$.
		Hence~$\h(h')\leq L$.
	\end{proof}

	\begin{lemma}\label{L:geodesic}
		If~$\alpha\in\Sh(\beta,\gamma)$, and~$h$ is a monotone isotopy from~$\beta$ to~$\gamma$ of height~$L$, then there is a monotone isotopy from~$\beta$ to~$\gamma$ of height at most~$L$ having~$\alpha$ as a level curve.
	\end{lemma}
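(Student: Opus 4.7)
The plan is to use $\alpha$ to cut the annulus $A(\beta,\gamma)$ into two sub-annuli $A(\beta,\alpha)$ and $A(\alpha,\gamma)$, apply Lemma~\ref{L:semiGeodesic} independently on each side to retract $h$ to a monotone isotopy ending (resp.\ starting) at $\alpha$, and then concatenate the two retractions.

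More precisely, first observe that since $\alpha\in\Sh(\beta,\gamma)$, by definition $\alpha$ lies in $A(\beta,\gamma)$ and is homotopic to its boundaries, so it splits the annulus into sub-annuli $A(\beta,\alpha)$ and $A(\alpha,\gamma)$. The property defining $\Sh$ only depends on curves in the homotopy class of $\alpha$, which is unchanged when we restrict to either sub-annulus, so $\alpha\in\Sh(\alpha,\gamma)$ and also $\alpha\in\Sh(\beta,\alpha)$.

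Next, I would apply Lemma~\ref{L:semiGeodesic} directly to $h$ with the triple $(\beta,\alpha,\gamma)$: since $\alpha\in\Sh(\alpha,\gamma)$ and $A(\alpha,\gamma)\subseteq A(\beta,\gamma)$, the retraction $h_1:=\ret{h}{A(\beta,\alpha)}$ is a monotone isotopy from $\beta$ to $\alpha$ of height at most $L$. Symmetrically, reversing the monotone isotopy $h$ yields a monotone isotopy $\bar h$ from $\gamma$ to $\beta$ of height $L$; applying Lemma~\ref{L:semiGeodesic} to $\bar h$ with the triple $(\gamma,\alpha,\beta)$ gives a monotone isotopy from $\gamma$ to $\alpha$ of height at most $L$, and reversing it produces a monotone isotopy $h_2$ from $\alpha$ to $\gamma$ of height at most $L$.

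Finally, I would concatenate $h_1$ and $h_2$ (rescaling the time parameter) to obtain a homotopy $h'$ from $\beta$ to $\gamma$ with $\alpha$ as a level curve at the gluing time. The height is $\max(\h(h_1),\h(h_2))\leq L$. Monotonicity of $h'$ follows because $h_1$ sweeps monotonically through $A(\beta,\alpha)$ while $h_2$ sweeps monotonically through $A(\alpha,\gamma)=A(\beta,\gamma)\setminus A(\beta,\alpha)$, so the union of the swept regions grows monotonically throughout; the glued curve is an isotopy since both halves are isotopies meeting cleanly at $\alpha$. The only mild subtlety I anticipate is checking that the concatenation is genuinely a valid discrete monotone isotopy at the junction curve $\alpha$, but this is immediate because $\alpha$ is a simple closed curve appearing as the terminal level curve of $h_1$ and the initial level curve of $h_2$, so no additional moves are needed to bridge the two pieces.
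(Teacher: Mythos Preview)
Your proposal is correct and follows essentially the same route as the paper: split at $\alpha$, invoke Lemma~\ref{L:semiGeodesic} on each side, and concatenate. The only cosmetic difference is that the paper states the second retraction $\ret{h}{A(\alpha,\gamma)}$ directly as a consequence of Lemma~\ref{L:semiGeodesic} applied symmetrically, whereas you spell out the symmetry by reversing $h$, applying the lemma verbatim, and reversing back; these amount to the same thing.
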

	\begin{proof}
		We have~$\alpha\in\Sh(\alpha,\beta)$ and~$\alpha\in\Sh(\alpha,\gamma)$.
		So by Lemma~\ref{L:semiGeodesic}, the monotone isotopies~$\ret{h}{A(\beta,\alpha)}$ from~$\beta$ to~$\alpha$ and~$\ret{h}{A(\alpha,\gamma)}$ from~$\alpha$ to~$\gamma$ have height at most~$L$ and can be composed to obtain a monotone isotopy from~$\beta$ to~$\gamma$ of height at most~$L$ with~$\alpha$ as a level curve.
	\end{proof}

	\begin{lemma}\label{L:spanning}
		Let~$\Pi=\{\pi_1,\dots,\pi_m\}$ be a set of paths from~$\gamma_0$ to~$\gamma_1$ without proper pairwise intersections, where each~$\pi_i$ is a shortest homotopic path in~$A(\gamma_0,\gamma_1)$ between its endpoints.
		If~$h$ is a monotone isotopy from~$\gamma_0$ to~$\gamma_1$ of height~$L$, then there exists a monotone isotopy of height at most~$L$ whose level curves all cross each~$\pi_i$ at most once (after infinitesimal perturbations).
	\end{lemma}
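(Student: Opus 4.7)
The plan is to eliminate excess crossings between level curves of $h$ and each $\pi_i$ via an iterative surgery argument that exploits the shortest-path property of the $\pi_i$. Since $h$ is monotone, each interior point $x \in A(\gamma_0,\gamma_1)$ has a well-defined sweep time $\tau(x) \in [0,1]$, and after the infinitesimal perturbation allowed by the statement the conclusion becomes equivalent to requiring, for each $i$, that $\tau \circ \pi_i : [0,1] \to [0,1]$ be monotone (it takes value $0$ at $\pi_i \cap \gamma_0$ and $1$ at $\pi_i \cap \gamma_1$, so monotonicity is equivalent to each level set being a single point, which is exactly the at-most-one-crossing condition).

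I would handle one $\pi_i$ at a time, say $\pi_1$. If $\tau \circ \pi_1$ fails to be monotone, then by symmetry assume a local maximum: parameters $u < v$ with $\tau(\pi_1(u)) = \tau(\pi_1(v)) = t_0$ and $\tau(\pi_1(s)) > t_0$ for $s \in (u,v)$, chosen innermost so that $\pi_1|_{[u,v]}$ together with an arc $\sigma$ of $h(t_0)$ bounds a disk $D$ on the $\gamma_1$-side of $h(t_0)$ containing no other crossings of $h(t_0)$ with $\pi_1$. The shortest-path hypothesis passes to subpaths (replacing $\pi_1|_{[u,v]}$ by a strictly shorter homotopic arc would produce a strictly shorter path in the homotopy class of $\pi_1$), and since $\sigma$ is homotopic to $\pi_1|_{[u,v]}$ rel endpoints through $D$, this yields the key length inequality $|\pi_1|_{[u,v]}| \leq |\sigma|$.

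I would then perform a surgery that includes $D$ in the region already swept by time $t_0$: keep $h$ unchanged outside $D$, replace $h(t_0)$ by $h'(t_0) := (h(t_0) \setminus \sigma) \cup \pi_1|_{[u,v]}$ (of length at most $|h(t_0)| \leq L$ by the inequality), and in a short window preceding $t_0$ sweep $D$ ``in reverse" from $\pi_1|_{[u,v]}$ toward $\sigma$ using the original foliation of $D$ by $h|_{[t_0,t^*]}$ played backwards. The intermediate curves are controlled in length by a retraction argument in the spirit of Lemma~\ref{L:semiGeodesic}, viewing the reverse sweep as a retraction onto $\pi_1|_{[u,v]}$. This surgery reduces $|h(t_0) \cap \pi_1|$ by $2$, and since the remaining $\pi_j$ do not properly intersect $\pi_1$ they cannot enter $D$ after perturbation, so crossings of level curves with other $\pi_j$ are unaffected. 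Iterating over the finitely many excess crossings (one local extremum at a time, then one $\pi_i$ at a time) terminates in an isotopy satisfying the claim.

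The main obstacle is the careful execution of the surgery in the time window around $t_0$: we must produce a monotone, isotopic, simple-curve sweep of $D$ whose intermediate curves all satisfy the $L$-length bound. The bound on $h'(t_0)$ itself is immediate from the shortest-path inequality, but verifying it for the intermediate curves during the reverse sweep is the technical heart of the argument and is what justifies the appeal to a retraction-style lemma. Monotonicity is preserved because the surgery only ``expands" the region swept up to each time (the new swept region contains the old one outside $D$ and all of $D$ past $t_0$), and disjointness of the $\pi_j$ from $D$ guarantees that the modifications do not interfere with the progress made for previously processed paths.
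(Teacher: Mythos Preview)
Your approach differs substantively from the paper's and has a genuine gap at precisely the point you flag as the ``technical heart''. The paper does \emph{not} do innermost-bigon surgery against a single $\pi_i$; instead it works with the global complexity $C_h=\max_t\sum_i c(h(t),\pi_i)$, isolates a maximal interval $(\tau_0,\tau_1)$ on which this maximum is attained, and then chooses a \emph{shortest closed curve} $\alpha\in\G(h(\tau_0),h(\tau_1))$ with $c_\Pi(\alpha)<C_h$ as an intermediate level. Because $\alpha\in\G\subseteq\Sh$, Lemma~\ref{L:geodesic} applies verbatim: the two retractions $\ret{h}{A(h(\tau_0),\alpha)}$ and $\ret{h}{A(\alpha,h(\tau_1))}$ each have height $\le L$ and strictly fewer $\Pi$-crossings. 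The point is that the intermediate curve used for retraction is itself locally shortest.

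Your surgery instead asks to pass through $h'(t_0)=(h(t_0)\setminus\sigma)\cup\pi_1|_{[u,v]}$. This curve is \emph{not} in $\Sh$ in general---only the $\pi_1|_{[u,v]}$ portion is geodesic, while $h(t_0)\setminus\sigma$ need not be---so Lemma~\ref{L:semiGeodesic} does not apply to the annulus $A(\gamma_0,h'(t_0))$. Concretely, the retraction you describe (replace arcs of $h(t)$ lying in $D$ by subarcs of $\pi_1|_{[u,v]}$) is length-nonincreasing for $t>t_0$, but it has a genuine discontinuity at $t_0$: $h'(t_0^-)=h(t_0)$ while $h'(t_0^+)\approx h'(t_0)$, and bridging these requires sweeping all of $D$. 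Producing that sweep with intermediate arcs of length $\le L-|h(t_0)\setminus\sigma|$ is itself a homotopy-height problem for the disk $D$, and nothing in your argument bounds it; the ``original foliation of $D$ by $h|_{[t_0,t^*]}$'' does not give a single arc per time, and retracting $h(t)$ onto $\overline D$ fails because the complement of $D$ in $A(h(t_0),\gamma_1)$ is an annulus (there is always one ``going-around'' arc with no homotopic subarc of $\pi_1|_{[u,v]}$). This is exactly why the paper routes through a shortest \emph{closed} curve rather than splicing in a shortest-path arc.

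A secondary error: your claim that the other $\pi_j$ ``cannot enter $D$'' is false---they may enter and leave $D$ through $\sigma\subset h(t_0)$, since the non-crossing hypothesis only prevents them from crossing $\pi_1|_{[u,v]}$. This is not fatal (the retraction step, where it works, still does not increase crossings with $\pi_j$, because the inserted subarcs lie along $\pi_1$), but your stated justification is incorrect and would need to be replaced.
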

	\begin{proof}
		Denote by~$c(a,b)$ the number of proper intersections of curves~$a$ and~$b$, and by~$c_\Pi(a)=\sum_{\pi\in \Pi} c(a,\pi)$ the total number of intersections of~$a$ with~$\Pi$.
		Let~$C_h=\max_t c_\Pi(h(t))$ be the maximum total number of intersections over all~$t$, and let~$I_h$ be the set of maximal intervals~$(\tau_0,\tau_1)$ with~$c_\Pi(h(t))=C_h$ if~$t\in(\tau_0,\tau_1)\in I_h$. If~$C_h=m$, each level curve of~$h$ crosses each~$\pi_i$ exactly once and we are done, thus we assume in the following that~$c_\Pi(h(0))=c_\Pi(h(1))=m<C_h$.
		
		If~$C_h>m$, we obtain a homotopy~$h'$ from~$h$ with~$C_{h'}<C_h$ by, for each interval~$(\tau_0,\tau_1)\in I_h$, replacing subhomotopy~$h|_{(\tau_0,\tau_1)}$ of~$h$ by some~$h^*=h'|_{(\tau_0,\tau_1)}$ with~$C_{h^*}<C_h$.

		Consider a single interval~$(\tau_0,\tau_1)\in I_h$ and let~$A=A(h(\tau_0),h(\tau_1))$.
		Then~$\Pi\cap A$ consists of~$C_h$ subarcs of~$\Pi$, each connecting the two boundaries of~$A$.              
		For~$t\in(\tau_0,\tau_1)$,~$h(t)$ intersects each such arc exactly once, and each~$h(t)$ intersects these arcs in the same order.
		Among the components of~$A\setminus \Pi$, there is a disk~$D_0$ bounded by one arc of~$h(\tau_0)$ and two arcs of~$\pi_i\cap A$, and a disk~$D_1$ bounded by one arc of~$h(\tau_1)$ and one arc of~$\pi_j$, such that these disks contain no other arcs of~$\Pi$.
		\begin{figure}[h]\centering%
			\includegraphics{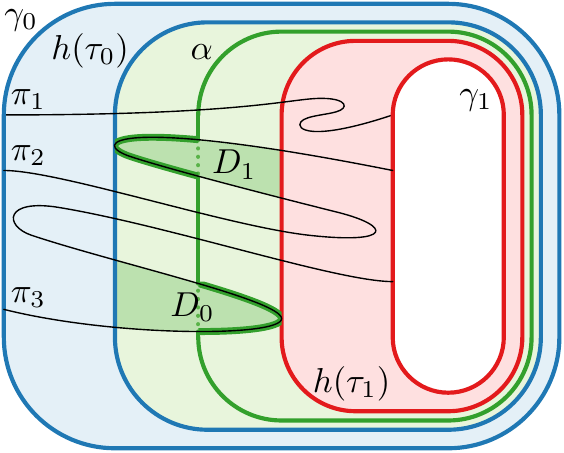}
			\caption{Choosing~$\alpha$ such that~$C_{h^*}<C_h$.}
			\label{F:spanning}
		\end{figure}%
		
		We can find~$\alpha\in\G(h(\tau_0),h(\tau_1))$ that intersects any arc of~$A\cap \Pi$ at most once (in the same order as~$h(t)$), and does not intersect the interiors of~$D_0$ and~$D_1$ (because the two arcs of~$\Pi$ on their boundary form a shortest path).
		Then~$c_\Pi(\alpha)<C_h$ and the retraction~$h_0=\ret{h}{A(h(\tau_0),\alpha)}$ has~$C_{h_0}<C_h$, since any arc~$h_0(t)$ has fewer intersections than~$h(t)$ has with~$\Pi$ (in particular with the boundary of~$D_1$).
		Symmetrically, for~$h_1=\ret{h}{A(\alpha,h(\tau_1))}$ we have~$C_{h_1}<C_h$.
		Since the composition~$h^*=h_0 h_1$ is a homotopy from~$h(\tau_0)$ to~$h(\tau_1)$ with~$C_{h^*}<C_h$ and height at most~$L$ (by Lemma~\ref{L:geodesic}), we can use this as a replacement for~$h|_{(\tau_0,\tau_1)}$ in~$h'$.
		By induction, we obtain a homotopy of height at most~$L$ whose level curves all cross each~$\pi_i$ at most once.
	\end{proof}

\section{Computing homotopy height in NP}\label{S:NP}
	In this section, we show that in the discrete setting, there is an optimal homotopy with a polynomial number of moves.
	First, we show that there is a homotopy that flips each face exactly once.

	\begin{lemma}
	\label{L:polynomialfaceflips}
	For an annulus~$(\Sigma,G)$ bounded by~$\gamma_0$ and~$\gamma_1$, there is a homotopy of minimum height between $\gamma_0$ and $\gamma_1$ that flips each face of~$G$ exactly once.
	\end{lemma}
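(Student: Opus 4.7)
The plan is to start from a monotone optimal-height isotopy~$h^*$ provided by Theorem~\ref{T:monotonediscrete}, working in the cross-metric dual~$(\Sigma,G^*)$, and then read off the number of face-flips it induces in the primal discrete homotopy. Under the correspondence of Section~\ref{S:preliminaries} (see Figure~\ref{F:dualMoves}), face-flips in the primal match exactly the instants at which~$h^*(t)$ passes through a vertex of~$G^*$, while edge-spikes and edge-unspikes correspond to tangential contacts with edges of~$G^*$ and so do not need to be controlled for this lemma.

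The key observation is that monotonicity forces each interior point of the annulus to be traversed by~$h^*$ at exactly one time: if $t<t'$, then~$h^*(t')$ is contained in the sub-annulus bounded by~$h^*(t)$ and~$\gamma_1$, so the region already swept on~$[0,t]$ is disjoint from every subsequent level curve. Since~$\gamma_0$ and~$\gamma_1$ are simple cycles of the primal graph~$G$, no vertex of~$G^*$ lies on the boundary of the annulus; every vertex of~$G^*$, and hence every face of~$G$, is therefore crossed exactly once by~$h^*$. Translating back to the primal, we obtain a discrete homotopy of height~$L$ in which each face of~$G$ is flipped exactly once.

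The main (already minor) obstacle is verifying the counting step: that each transversal passage of~$h^*$ over a vertex of~$G^*$ produces exactly one face-flip in the primal. This is built into the cross-metric assumption that the values of~$t$ for which~$h^*(\cdot,t)$ fails to be in general position are isolated and carry at most one degeneracy, so that face-flips correspond bijectively with the isolated times at which~$h^*$ meets a vertex of~$G^*$.
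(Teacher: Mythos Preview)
Your proof is correct and follows essentially the same approach as the paper: both start from the monotone optimal isotopy given by Theorem~\ref{T:monotonediscrete} and argue that monotonicity forces every face to be swept exactly once. The only cosmetic difference is that the paper phrases the counting in the primal via the nested sub-annuli~$A(h(t),\gamma_1)$ (a face is flipped precisely when it leaves this shrinking annulus, which by monotonicity happens once), whereas you phrase it dually as each vertex of~$G^*$ being crossed by~$h^*$ exactly once.
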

	\begin{proof}
	By Theorem~\ref{T:monotonediscrete}, some homotopy~$h$ of minimum height is a monotone isotopy.
	For two consecutive level curves~$h(t)$ and~$h(t')$ in a monotone isotopy, the move between~$h(t)$ and~$h(t')$ flips face~$F$ if and only if~$F$ lies in~$A(h(t'),\gamma_1)$ or~$A(h(t),\gamma_1)$ but not both.
	Because~$A(h(0),\gamma_1)$ contains all faces, and~$A(h(1),\gamma_1)$ contains none, each face is flipped at least once.
	By monotonicity, we have for~$0\leq t'<t\leq 1$, that~$A(h(t'),\gamma_1)\supseteq A(h(t),\gamma_1)$.
	So, if face~$F$ does not lie in~$A(h(t),\gamma_1)$, it will not be flipped again in~$h|_{(t,1]}$.
	Hence each face is flipped exactly once.
	\end{proof}

	It remains to show that each edge is involved in a polynomial number of (un)spike moves; note that this does not directly follow from monotonicity, since a second spike of the same edge does not violate monotonicity (as can easily be seen in the dual setting).

\paragraph{Postponing spikes.}
	Before we bound the number of spike moves, we transform an optimal monotone isotopy~$h$ into one where each spike move is delayed as much as possible, and each unspike move happens as soon as possible.
	We explain this transformation in the dual setting.
	
	Suppose a spike move occurs for edge~$e$ between~$h(t_i)$ and~$h(t_{i+1})$, then denote by~$s$ the (unique) arc of~$A(h(t_i),h(t_{i+1}))\cap G^*$ both of whose endpoints lie on~$h(t_{i+1})$.
	This arc is a subarc of the dual edge~$e^*$.
	Consider the maximum~$j>i$, for which the component~$s_j$ of~$e^*\cap A(\gamma_0,h(t_j))$ containing~$s$ has both endpoints on~$h(t_j)$, and for all~$t_i<t\leq t_j$, curve~$h(t)$ has exactly two crossings with~$s_j$ (so the only action performed on arc~$s_j$ was the spike between~$h(t_i)$ and~$h(t_{i+1})$).
	Then~$s_j$ and~$h(t_j)$ enclose a disk~$D_j$.
	If the interior of~$D_j$ contains no edges of~$G^*$, we can delay the spike of~$e$ at least until just before~$t_j$, as illustrated in Figure~\ref{F:dualSimplify} (a), where~$D_j$ is shaded.

	Depending on what happens in the move between~$h(t_j)$ and~$h(t_{j+1})$, we may transform the isotopy further.
	This move is either (1) an unspike attached to~$s_j$, or (2) a face-flip connected to one endpoint or (3) both endpoints\footnote{This happens only if the primal edge is adjacent to only one face of~$G$.} of~$s_j$, or (4) a face-flip or spike inside~$D_{j+1}$.
	In cases (1) and (2), we cancel the spike against the unspike or flip, as illustrated in Figure~\ref{F:dualSimplify} (b) and (c).
	We do not postpone the spike in cases (3) and (4).
	Symmetrically, unspike moves can be made to happen earlier.
	Observe that these operations cannot increase the height of a homotopy since each level curve in the resulting homotopy crosses a subset of the edges of some curve in the original homotopy.
	\begin{figure}[h]\centering%
		\includegraphics{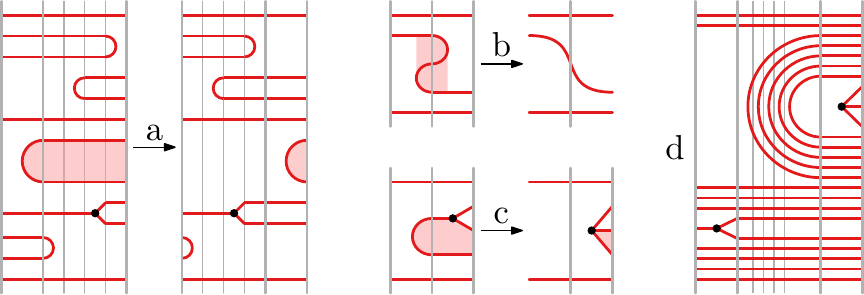}%
		\caption{Delaying spikes (a). %
		Canceling spikes against unspikes (b) or faces (c). %
		Part of a reduced isotopy (d).}%
		\label{F:dualSimplify}%
	\end{figure}

	Call a homotopy \emphdef{reduced} if it is the result of applying the above rules to~$h$ until no spike can be canceled or postponed until after a flip or unspike, and no unspike can be canceled or be made to happen before any prior flip or spike.
	Observe that starting from an optimal monotone isotopy, the reduced isotopy is also an optimal monotone isotopy.
	The structure of reduced homotopies is given in Lemma~\ref{L:reduced}.
	\begin{lemma}
		Between any two consecutive face-flips in a reduced isotopy lies a single (possibly empty) path of unspike moves followed by a (possibly empty) path of spiked moves.\label{L:reduced}
	\end{lemma}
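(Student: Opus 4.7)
The plan is to argue by contradiction. Suppose some reduced isotopy~$h$ contains two consecutive face-flips between which the sequence of moves is not of the required form. Then there is a spike at some time appearing before an unspike (both strictly between the two face-flips). By choosing the first unspike~$m'$ that is preceded by some spike in this window, all moves strictly between that spike and~$m'$ are themselves spikes; hence there is an adjacent pair $m, m'$ in the sequence with $m$ a spike and $m'$ an unspike, and no face-flip between them. I would work entirely in the dual cross-metric picture, where~$m$ creates two new crossings on a single dual edge~$e^*$, bounding a small arc $s\subset e^*$, and~$m'$ removes two crossings on some arc~$s'$ on some dual edge.

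I would then split into two cases. In Case~1, $s=s'$, so $m$ and $m'$ are inverse to each other and form a canceling pair of the kind described in case~(1) of the reduction rules. Since $h$ is reduced, no such cancelable pair can exist, a contradiction. In Case~2, $s\neq s'$, so $m'$ acts on a different pair of crossings, located elsewhere on the curve. Let $D$ be the disk bounded by~$s$ together with the short arc of~$h$ cut off by~$s$. Because $m$ is a local operation adjacent to a single edge~$e^*$ and is performed on the curve immediately before~$m'$, the disk~$D$ is exactly the fresh bump created by~$m$ and contains no edges of~$G^*$ in its interior. Moreover, $m'$ is an unspike not attached to~$s$ and not performed inside~$D$, so the move at this step falls into none of the postponement-blocking cases~(3) or~(4). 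The postponement procedure then applies and delays~$m$ past~$m'$, contradicting the reduced hypothesis, which forbids postponing a spike past a subsequent unspike.

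I expect the main obstacle to be the rigorous justification of Case~2: namely, verifying that the disk~$D$ enclosed by a freshly spiked arc really is empty of edges of~$G^*$, and that~$m'$ cannot accidentally fall into one of the blocking cases. Both points ultimately reduce to the locality of the spike move: $D$ is just the tiny region between $h(t)$ and a small arc on~$e^*$, the unspike~$m'$ can only interact with~$s$ or~$D$ by undoing the bump created by~$m$ (which is precisely Case~1), and no other move stands between $m$ and~$m'$ that could change~$s_j$ or introduce edges into~$D$. Once both cases are ruled out, the sequence of moves between any two consecutive face-flips in a reduced isotopy must consist of a (possibly empty) block of unspike moves followed by a (possibly empty) block of spike moves, which is the statement of the lemma.
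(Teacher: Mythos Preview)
Your contradiction argument correctly shows that in a reduced isotopy no spike can be immediately followed by an unspike between consecutive face-flips, and hence that the moves split into a block of unspikes followed by a block of spikes. This matches the paper's first assertion (``no unspike follows a spike move''), and your Case~1/Case~2 split is essentially a fleshed-out justification of what the paper states in one line.

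However, you have only proved half of the lemma. The word \emph{path} in the statement is not a loose synonym for ``block'' or ``sequence''; it means that the spike moves, read in order, trace a walk in the primal graph~$G$, each spike starting at the endpoint of the previous one, and likewise for the unspikes. This structural claim is used essentially in the proof of Lemma~\ref{L:shortSpikedPaths}, which treats the spikes between two flips as a walk~$\sigma$ in~$G$ and decomposes it into subwalks~$\sigma_0 s_1 \sigma_1 s_2 \sigma_2 s_3 \sigma_3$. Your conclusion, that the moves form ``a block of unspikes followed by a block of spikes'', does not yield this.

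The missing argument is short but distinct from what you wrote. Once the spikes form a contiguous block, take any spike~$m$ that is not the last move before the face-flip; the next move is another spike~$m''$. Since~$h$ is reduced,~$m$ cannot be postponed past~$m''$, so~$m''$ must fall into one of the blocking cases. Case~(3) requires a face-flip, which is excluded here, so~$m''$ must be a spike inside the disk~$D$ just created by~$m$ (case~(4)). In primal terms, if~$m$ spiked~$u\to v$, the only vertex available inside~$D$ is~$v$, so~$m''$ spikes an edge out of~$v$. Iterating, the spikes form a walk in~$G$ whose final vertex lies on the boundary of the face about to be flipped; this is what the paper means by ``the spikes surround the next face-flip''. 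The symmetric argument handles the unspikes and the previous face-flip.
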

	\begin{proof}
		In a reduced homotopy, no unspike follows a spike move, and any spikes that remain `surround' the next face-flip (if any), see Figure~\ref{F:dualSimplify} (d).
		Symmetrically, all unspikes between two consecutive face-flips surround the previous face-flip (if any).
		From the primal perspective, these unspike moves form a path from the previously flipped face and spike moves form a path towards the next flipped face.
	\end{proof}
	
	Any reduced homotopy starts with zero or more unspikes from~$\gamma_0$, after which a possibly empty path of spikes to the first face-flip occurs, then that face is flipped, and a possibly empty path of unspikes enabled by this flip occurs.
	Subsequently, a spiked path, face-flip, and unspiked path occur for the remaining faces.
	Finally, a sequence of spikes towards~$\gamma_1$ may occur.
	We may assume that on~$\gamma_0$ and~$\gamma_1$, any two consecutive edges are different, such that no immediate unspike moves are possible from~$\gamma_0$, and no immediate spike moves are possible to~$\gamma_1$.
	Otherwise we may by Lemma~\ref{L:semiGeodesic} perform those moves immediately without increasing the homotopy height.

\paragraph{Bounding spike moves.}
	
	We are now ready to bound the number of spike and unspike moves in an optimal homotopy.
	Call a homotopy~$h$~\emphdef{good} if it is a minimum-height reduced monotone isotopy and it has a minimum number of moves.
	By Theorems~\ref{T:isotopydiscrete} and~\ref{T:monotonediscrete} , the height of~$h$ is the homotopy height between~$\gamma_0$ and~$\gamma_1$.

	Define an edge-spike of an edge~$e$ to be \emphdef{between} existing copies of~$e$, if the portion of the dual edge~$e^*$ crossed by the (dual) level curve, lies between two existing crossings of the level curve with~$e^*$, such as in Figure~\ref{F:case1}.
	We show that such spikes never appear in~$h$.
	\begin{figure}[ht]\centering
		\includegraphics{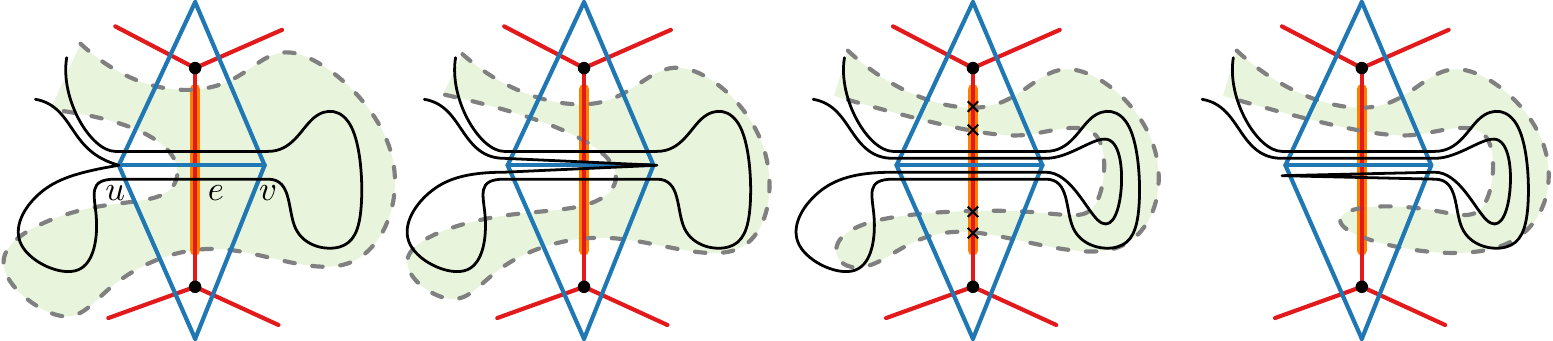}
		\caption{Development of a spike between existing copies of~$e$.
		Part of the graph in red (dual) and blue (primal) and the level curve in gray dashed (dual) and black (primal, perturbed).}
		\label{F:case1}
	\end{figure}

	\begin{lemma}\label{L:betweenSpikes}
		If homotopy~$h$ is good, there are no spikes between existing copies of any edge~$e$.
	\end{lemma}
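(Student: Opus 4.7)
The plan is to argue by contradiction, producing from an offending spike a new homotopy with the same height but strictly fewer moves, which contradicts the minimality of moves in a good homotopy. So assume that at time $t_i$ an edge-spike on $e$ introduces two new crossings $q_1, q_2$ of the level curve with $e^*$ that lie between two existing consecutive crossings $p_1, p_2$ of $h(t_i)$ with $e^*$ on $e^*$.

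First I would isolate the local topology near the spike. The short subarc $\sigma$ of $e^*$ from $q_1$ to $q_2$ (the one containing no $p_j$) together with the new finger of $h(t_{i+1})$ produced by the spike bounds a small disk $\Delta$ in $\Sigma$. Because $p_1, p_2$ are existing crossings of $h(t_i)$ on either side of $\sigma$, a small neighborhood of $\sigma$ was already crossed twice by $h(t_i)$, so after an infinitesimal perturbation the level curve just after the spike is a simple closed curve locally intersecting $e^*$ at $p_1, q_1, q_2, p_2$ in order. The key preparatory observation is then a parity/linking argument: in order for the finger to ever be removed (which it must be, since~$\gamma_1$ crosses $e^*$ only at its endpoints along the primal boundary), it has to be either flipped away through an adjacent face or unspiked later. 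In a reduced homotopy, unspikes cannot occur after a spike within the same inter-flip block (Lemma~\ref{L:reduced}), so the removal must happen via one of the face-flips lying after $t_i$.

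Next I would trace the finger forward and use the retraction machinery of Section~\ref{S:lemmas} to excise it. Since $h$ is monotone, the faces eventually flipped inside $\Delta$ form a contiguous sub-annulus swept between $h(t_i)$ and some later level curve. I would apply Lemma~\ref{L:semiGeodesic} to the annulus bounded by $h(t_i)$ and the level curve immediately after all of~$\Delta$ has been cleared, replacing the subhomotopy inside $\Delta$ by a subhomotopy that performs the same face-flips and spikes, but routed through the existing crossings $p_1, p_2$ instead of through freshly-created $q_1, q_2$. Concretely, every move inside $\Delta$ that was performed with the help of the finger can equally well be performed from the outside (i.e., with the level curve entering across $p_1$ or $p_2$), because the arc of $h(t_i)$ between $p_1$ and $p_2$ already makes $\Delta$ accessible on both sides of $\sigma$. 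This local surgery removes the spike at $(q_1,q_2)$ and the matching later unspike or flip that absorbs the finger, while leaving every other move unchanged.

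Finally I would check that the resulting homotopy $h'$ is still a monotone isotopy of height at most $L$, using Lemma~\ref{L:semiGeodesic}, and that it has at least two fewer moves than $h$ (one spike and its mate). This contradicts the assumption that $h$ is good, so no such between-spike exists. The main obstacle I expect is making the rerouting step genuinely local and monotone: if $\Delta$ contains other edges of $G^*$, one has to argue that the sequence of moves inside $\Delta$ can be executed from the neighboring sub-annulus without ever making a level curve longer than $L$, and that monotonicity is preserved when the finger is never created. Handling this cleanly will likely require a careful case analysis on whether the finger is first touched by a face-flip on a face incident to $\sigma$ or by further spikes that extend it, mirroring the four cases that arose in the postponement argument preceding Lemma~\ref{L:reduced}.
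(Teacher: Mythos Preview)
Your overall strategy---assume a between-spike exists, perform surgery to remove it together with a matching later move, and contradict the minimality of moves---is the same as the paper's. But the execution has a real gap: the tool you reach for does not apply, and the actual surgical step is missing.

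First, Lemma~\ref{L:semiGeodesic} cannot be invoked here. That lemma requires the boundary curve~$\alpha$ of the sub-annulus to lie in~$\Sh(\alpha,\gamma)$, i.e., to be a shortest curve through each of its points in its homotopy class. Arbitrary level curves~$h(t_i)$ of~$h$ have no reason to satisfy this, so retracting to an annulus bounded by two level curves is not justified by that lemma. Your plan of ``replacing the subhomotopy inside~$\Delta$'' via retraction therefore has no foundation.

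Second, the description of the surgery is too vague to be correct. The paper does not try to ``re-execute the same moves entering through~$p_1,p_2$''; instead it performs a pointwise local reconnection of the curve. One takes the \emph{last} between-spike (this choice is essential), which guarantees that on the relevant arc~$\pi\subset e^*$ the crossing count jumps from~$2$ to~$4$ at some~$\tau_0$, stays at exactly~$4$ until a unique later time~$\tau_1$, and then drops back to~$3$. For every~$t\in(\tau_0,\tau_1)$, cut~$h(t)$ at its four crossings~$p_1,p_2,p_3,p_4$ with~$\pi$ and reglue the four resulting arcs in the other pairing (using zero-length connectors along~$\pi$ inside the two adjacent dual faces). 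A short case analysis---which of the two ``side'' components~$C_1,C_2$ of~$A(h(t),\gamma_1)\setminus\pi$ contracts first---tells you which reglueing to use so that the modified family is again a monotone isotopy. The reconnected curves have two fewer crossings with~$\pi$ at every time, so the spike at~$\tau_0$ and the move at~$\tau_1$ both disappear, yielding strictly fewer moves without increasing height. Your disk~$\Delta$ and the phrase ``moves inside~$\Delta$'' do not capture this; the relevant object is the dynamic decomposition of~$A(h(t),\gamma_1)$ by~$\pi$, not a fixed disk to be swept. The ``last such spike'' trick is precisely what lets you avoid the open-ended case analysis you anticipate at the end of your proposal.
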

	\begin{proof}
		Suppose the move from~$h(t_i)$ to~$h(t_{i+1})$ is the last move between existing copies of the same edge, and assume this move is a spike of edge~$e=(u,v)$ from~$u$ to~$v$.
		In the dual setting, consider the component~$\pi$ of~$e^*\cap A(h(t_i),\gamma_1)$ that is crossed by the spike move (highlighted in Figure~\ref{F:case1}).
		Let~$c(t)$ be the number of crossings of~$h(t)$ with~$\pi$, then for some~$\tau_0$ between~$t_i$ and~$t_{i+1}$,~$c(\tau_0)=3$, and for some unique~$\tau_1>\tau_0$,~$c(\tau_1)=3$ again, and for~$\tau_0<t<\tau_1$, we have~$c(t)=4$ (because we assumed this was the last spike between existing copies of any edge).

		For~$\tau_0<t<\tau_1$, label the four crossings of~$h(t)$ with~$\pi$ by~$p_1(t)$,~$p_2(t)$,~$p_3(t)$, and~$p_4(t)$, in order along~$e^*$, so the spike move at~$\tau_0$ creates~$p_2$ and~$p_3$.
		Consider the three components~$C_1(t)$,~$C_2(t)$ and~$C_3(t)$ of~$A(h(t),\gamma_1)\setminus\pi$, such that~$C_1$ touches~$p_1$ and~$p_2$ from the dual face of~$u$, and~$C_2$ touches~$p_3$ and~$p_4$ from the dual face of~$u$, and~$C_3$ touches~$e^*$ in two segments from the dual face of~$v$.
		Because~$C_3$ lies between~$C_1$ and~$C_2$,~$h$ will first contract either component~$C_1$ or~$C_2$, namely at~$h(\tau_1)$.
		Assume without loss of generality that~$C_2$ contracts first.

		We modify~$h|_{[\tau_0,\tau_1]}$ such that any level curve crosses~$\pi$ at most twice by reconnecting the neighborhood of~$\pi$, whose local structure evolves exactly as depicted in the top row of Figure~\ref{F:case1local}.
		We essentially remove crossings~$p_2$ and~$p_3$, and reconnect~$\partial C_1(t)\cap h(t)$ with~$\partial C_2(t)\cap h(t)$ using a (zero-length) segment along~$\pi$ in face~$u^*$.
		On the other side, consider the arc of~$\partial C_3(t)\cap h(t)\cap v^*$ with~$p_4(t)$ as endpoint.
		We cut this arc in two subarcs~$a$ and~$b$, where~$a$ has~$p_4(t)$ as endpoint, and connect the other endpoint to the arc of~$\partial C_3(t)\cap h(t)$ at the endpoint at~$p_2(t)$ using a segment along~$\pi$ in~$v^*$.
		Similarly, we connect the endpoint of that at~$p_3(t)$ to the loose end of~$b$.
		These reconnections are depicted in the bottom row of Figure~\ref{F:case1local}.
		A more global view (corresponding to Figure~\ref{F:case1}) is illustrated in Figure~\ref{F:case1solution}.

	\begin{figure}[ht]\centering
		\includegraphics{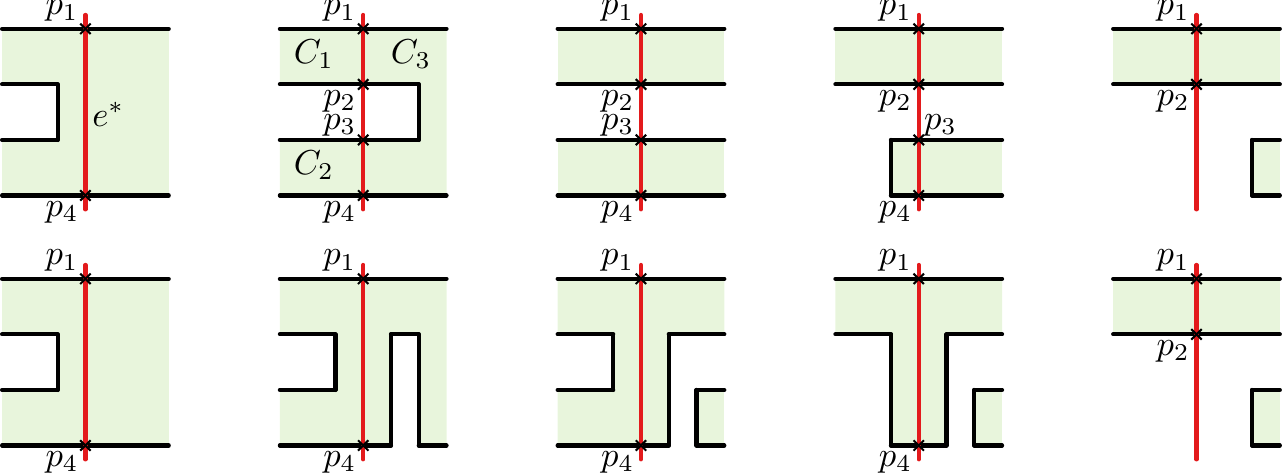}
		\caption{Top: the neighborhood of~$e^*$ throughout~$h$. Bottom: the reconnected homotopy, reducing crossings with~$e^*$. From left to right: the homotopy just before~$\tau_0$, just after~$\tau_0$, between~$\tau_0$ and~$\tau_1$, just before~$\tau_1$, and just after~$\tau_1$.}
		\label{F:case1local}
	\end{figure}

	\begin{figure}[ht]\centering
		\includegraphics{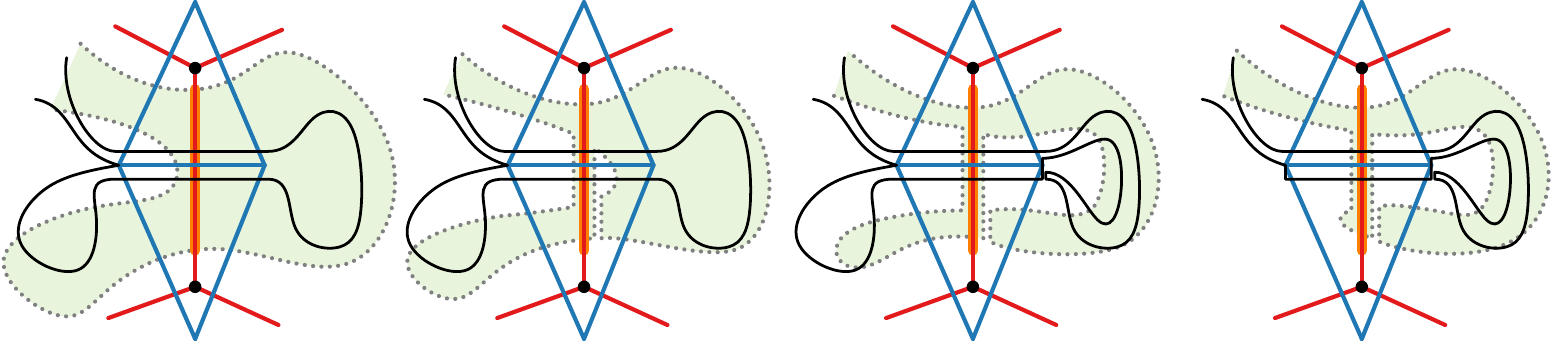}
		\caption{Figure~\ref{F:case1} after a local surgery that avoids the spike between copies of~$e$.}
		\label{F:case1solution}
	\end{figure}

		Observe that the reconnected curves can be made to appear continuously in such a way that they form a monotone isotopy.
		Because level curves only changed in the neighborhood of~$\pi$, where they were shortened by avoiding the crossings with~$\pi$, we have an isotopy whose height is at most that of~$h$, and in which at least one spike is removed.
		So, because~$h$ was optimal, we have constructed an optimal monotone isotopy with fewer moves.
		Therefore, the corresponding reduced isotopy also has fewer moves, contradicting that~$h$ was good.
	\end{proof}

	Our final step towards bounding the number of edge spikes is to derive a contradiction if for some interval~$[\tau_0,\tau_1]$ without face-flips, an edge~$e$ is spiked (or unspiked)~$5$ times in~$h|_{[\tau_0,\tau_1]}$. The proof is similar to that of Lemma~\ref{L:betweenSpikes}.
	\begin{lemma}\label{L:shortSpikedPaths}
		For a good homotopy~$h$, any subhomotopy~$h|_{[\tau_0,\tau_1]}$ contains either a face-flip, or at most~$4$ spike (and at most~$4$ unspike) moves of the same edge.
	\end{lemma}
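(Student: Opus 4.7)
My plan is to prove Lemma~\ref{L:shortSpikedPaths} by contradiction, adapting the local surgery template of Lemma~\ref{L:betweenSpikes}. The unspike statement follows by the time-reversal symmetry of $h$, so I focus on spikes. Suppose $h|_{[\tau_0, \tau_1]}$ contains no face-flip yet spikes the same edge $e = (u, v)$ at least five times. By Lemma~\ref{L:reduced}, every subhomotopy between two consecutive face-flips splits into a prefix of unspikes followed by a suffix of spikes, so all five moves on $e$ within $[\tau_0, \tau_1]$ are genuine spike moves. By Lemma~\ref{L:betweenSpikes}, no spike of $e$ in a good homotopy lies between existing copies of $e$, so each new pair of crossings with $e^*$ is added on the outside of the existing crossing arrangement on $e^*$.

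Each spike of $e$ can be classified by its origin vertex (either $u$ or $v$, which determines whether the new detour arc lies in face $v^*$ or $u^*$) and by which of the two outer arcs of $e^*$ the new pair of crossings extends into (the $u^*$-end or the $v^*$-end). This gives four possible types. By the pigeonhole principle, at least two of the five spikes share the same type, producing two same-oriented detour arcs, either parallel or nested, lying in a common dual face of $e^*$.

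With such a pair of same-type spikes identified, I would perform the same local reconnection as in the proof of Lemma~\ref{L:betweenSpikes}: in a neighborhood of $e^*$, modify the level curves between the inner spike's creation and its eventual unspike, shortcutting two crossings along $e^*$ in every affected intermediate curve. This is the analogue of the top-to-bottom transition of Figure~\ref{F:case1local}. Since the reconnection only removes crossings, the height cannot increase, and since it can be arranged continuously in a small strip around $e^*$, the family of curves remains a monotone isotopy. The modified homotopy has one fewer spike-unspike pair than $h$, so after re-reducing it we obtain an optimal monotone reduced isotopy with strictly smaller move count, contradicting the goodness of $h$.

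The main obstacle in this plan is the topological bookkeeping of the surgery: one must locate the unique later unspike that cancels the inner spike (which must exist, because no face-flip in $[\tau_0, \tau_1]$ removes the detour, while monotonicity forces any crossing created in the interior of a between-flip phase to be eventually undone before the next flip), and show that the local reconnection at $e^*$ extends to a globally continuous family of simple curves. Both tasks mirror the case analysis near the end of the proof of Lemma~\ref{L:betweenSpikes}, relying on monotonicity and on the reduced structure of $h$ to propagate the reconnection coherently across all intermediate level curves.
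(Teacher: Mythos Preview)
Your overall scaffold (contradiction, use Lemma~\ref{L:reduced} to isolate a pure spike segment, invoke Lemma~\ref{L:betweenSpikes} to force the new crossings outward) matches the paper's opening moves. But the core surgery you propose has a genuine gap, and the pigeonhole you set up is weaker than what the argument actually needs.

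First, the claim that ``monotonicity forces any crossing created in the interior of a between-flip phase to be eventually undone before the next flip'' is false. In a reduced homotopy the spikes between two consecutive face-flips form a path \emph{toward} the next face $F$; they are there precisely so that $F$ becomes flippable, and they are \emph{not} undone before that flip. After flipping $F$, only some of those spikes (a path emanating from $F$) may be unspiked; a given spike of $e$ can persist through many subsequent flips. Hence there is no ``unique later unspike'' within $[\tau_0,\tau_1]$ to cancel against, and the Lemma~\ref{L:betweenSpikes} reconnection---which pairs a spike with an unspike occurring while the crossing count on a fixed arc of $e^*$ stays elevated---does not transplant verbatim.

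Second, your pigeonhole yields only \emph{two} spikes of the same type, whereas the paper extracts \emph{three} spikes $s_1,s_2,s_3$ that are consecutive on $e^*$ and occur in that time order (five spikes, all exterior by Lemma~\ref{L:betweenSpikes}, put at least three on the same side of $e^*$, and those are automatically nested in time order). The middle spike $s_2$ is the one removed: the spiked path $\sigma=\sigma_0 s_1 \sigma_1 s_2 \sigma_2 s_3 \sigma_3$ is replaced by a spiked \emph{tree} $\lambda$ that grafts $\sigma_2 s_3 \sigma_3$ onto $\sigma_0 s_1 \sigma_1$ (or vice versa) via a zero-length arc along $e^*$. Which grafting is used depends on which of the two components $C_1,C_2$ of $A(h(\tau_1),\gamma_1)\setminus\pi$ the future homotopy contracts first; this is why both neighbours $s_1$ and $s_3$ of $s_2$ are needed. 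The move that is actually saved is the \emph{later} unspike corresponding to that contraction, possibly many face-flips downstream---not an unspike inside $[\tau_0,\tau_1]$. Your two-spike setup gives no middle spike to excise and no guarantee that the single available grafting direction is the one compatible with the future evolution of $h$.
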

	\begin{proof}
		Suppose~$h|_{[\tau_0,\tau_1]}$ contains no face-flip, then because~$h$ is reduced, the spike moves in~$h|_{[\tau_0,\tau_1]}$ form a path~$\sigma$ of spike moves in~$G$.
		Assume for a contradiction that some edge~$e=(u,v)$ lies on~$\sigma$ at least~$5$ times.
		We say two spikes~$s_1$ and~$s_2$ are consecutive on~$e^*$ if no spike occurs on the arc of~$e^*$ between the first crossing of~$s_1$ with~$e^*$ and the first crossing of~$s_2$ with~$e^*$.

		Because by Lemma~\ref{L:betweenSpikes},~$h$ does not contain spikes between existing copies of edges, we can find three spikes~$s_1$,~$s_2$ and~$s_3$ of~$e$ on~$\sigma$ where~$s_1$ and~$s_2$ as well as~$s_2$ and~$s_3$ are consecutive on~$e^*$, and~$s_1$ happens before~$s_2$ and~$s_2$ happens before~$s_3$.
		Let~$\sigma_0$,~$\sigma_1$,~$\sigma_2$ and~$\sigma_3$ be the subpaths of~$\sigma$ such that~$\sigma=\sigma_0 s_1 \sigma_1 s_2 \sigma_2 s_3 \sigma_3$, also labeled in Figure~\ref{F:case2}.
		
	\begin{figure}[h]\centering
		\includegraphics{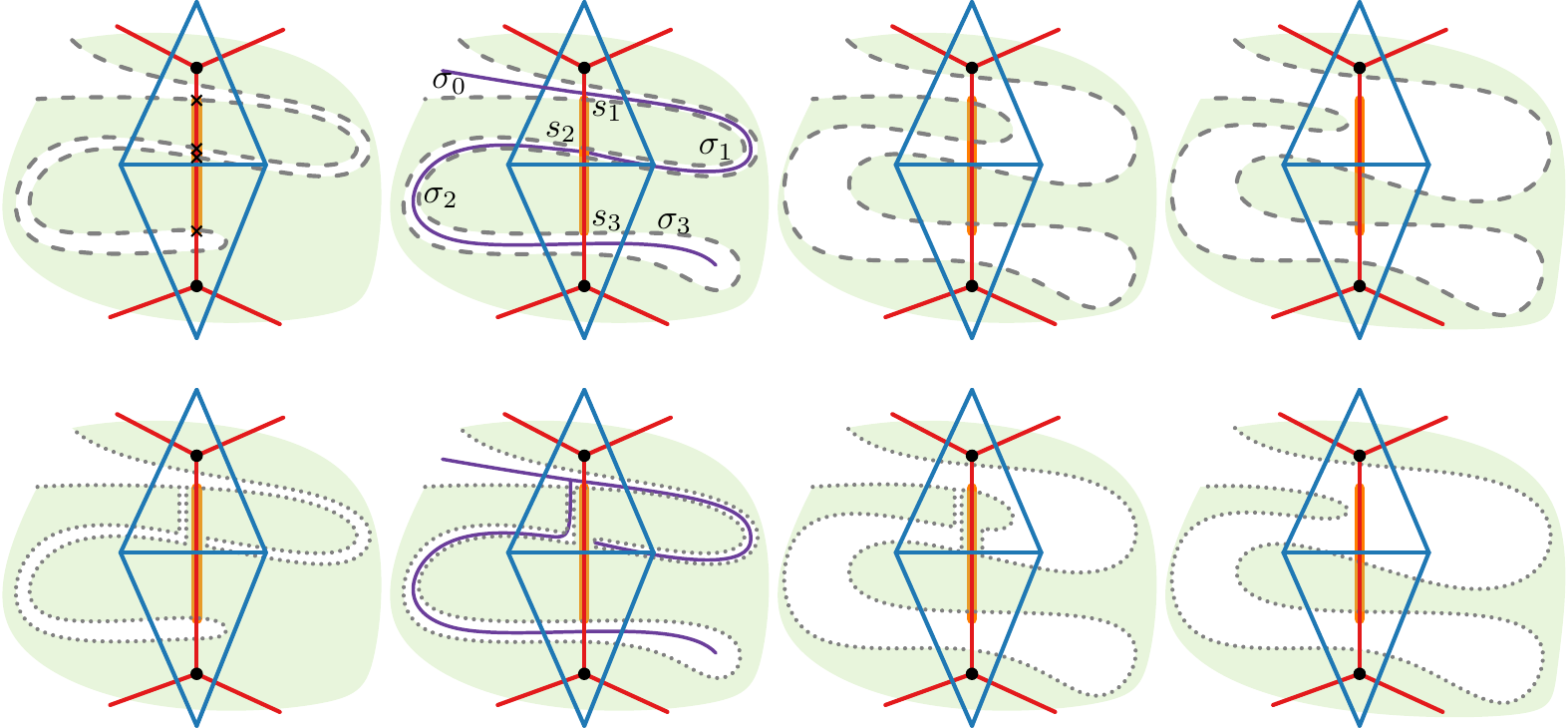}
		\caption{A local surgery to avoid five spikes of the same edge on a single spiked path.}
		\label{F:case2}
	\end{figure}
	\begin{figure}[h]\centering
		\includegraphics[width=\textwidth]{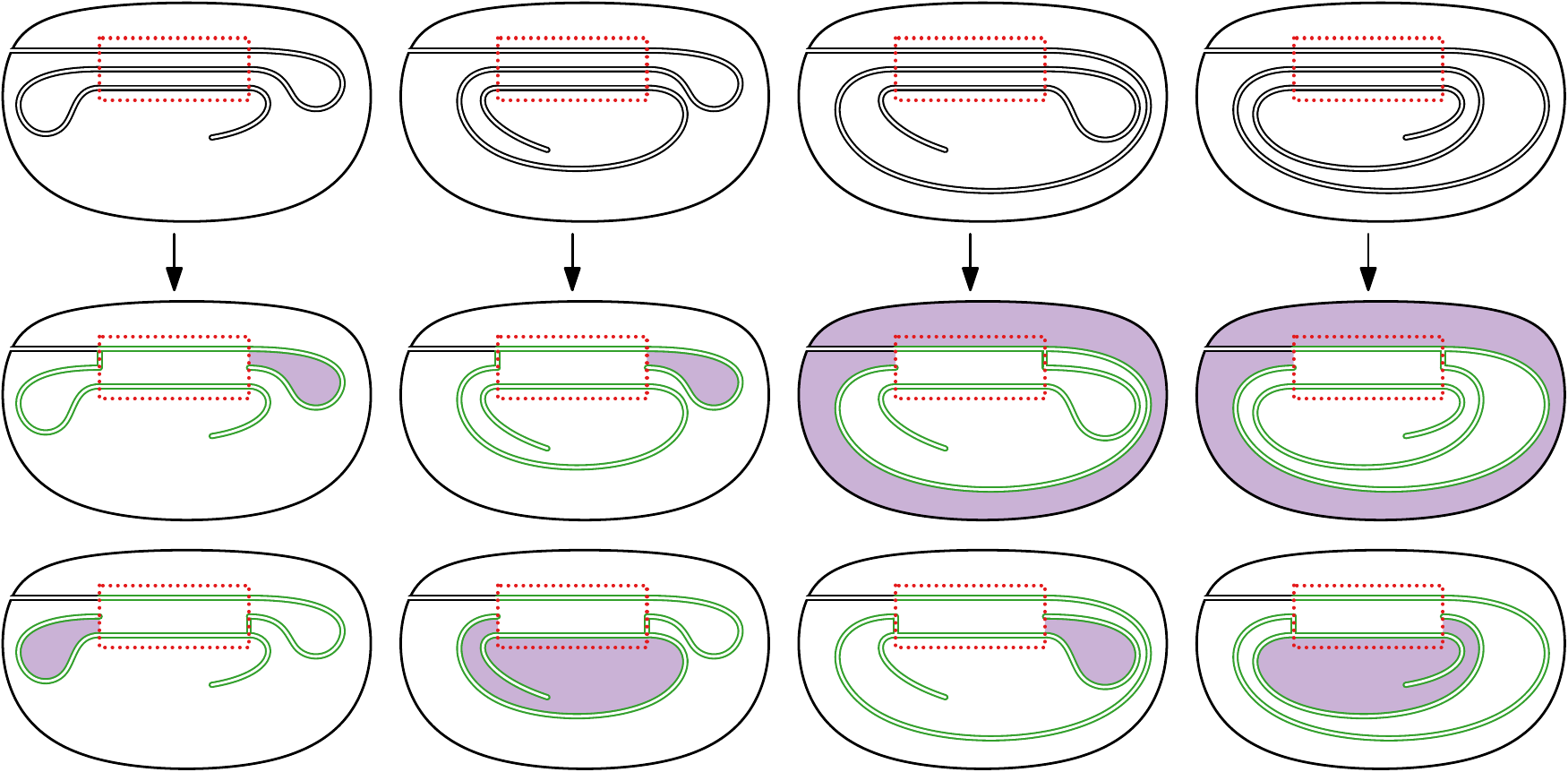}
		\caption{Cases for shortcutting spiked paths visiting the same edge often. The neighborhood of the repeated edge is dotted and the component contracted first is shaded.}
		\label{F:3Edge}
	\end{figure}

		To get rid of spike~$s_2$, we connect~$\sigma_0 s_1 \sigma_1$ to~$\sigma_2 s_3 \sigma_3$ in an alternative way.
		Figure~\ref{F:3Edge} illustrates all possible ways~$s_1$,~$s_2$ and~$s_3$ (in the dotted area) can be connected by~$\sigma$, and how our method will reconnect~$\sigma$ without~$s_2$.
		Formally, to decide where this reconnection takes place, we consider the components of~$A(h(\tau_1),\gamma_1)\setminus\pi$, where~$\pi$ is the arc of~$e^*$ between its intersections with~$s_1$ and~$s_3$.
		There are three components, component~$C_1$ touching~$\pi$ and~$\sigma_1$, component~$C_2$ touching~$\pi$ and~$\sigma_2$, and component~$C_3$ touching~$\sigma$ entirely, and touching~$\pi$ in two arcs.
		The component that~$h$ contracts first is either~$C_1$ or~$C_2$ (since~$C_3$ lies between the other two).

		First consider the case where~$C_1$ is contracted first, then the path~$\sigma_2 s_3 \sigma_3$ starts in the dual face of the endpoint of~$s_2$.
		Note that there is a (zero-length) path between the start or endpoint of~$s_1$ and the endpoint of~$s_2$ because~$s_1$ and~$s_2$ are adjacent along~$e^*$.
		Use this zero-length path to connect~$\sigma_2 s_3 \sigma_3$ to~$\sigma_0 s_1\sigma_1$ at the start or endpoint of~$s_1$ and call the resulting tree~$\lambda$.

		We claim we obtain an optimal monotone isotopy~$h'$ from~$h$ by replacing the spiked path~$\sigma$ by the spiked tree~$\lambda$, and removing the unspike move of~$e^*$ following the contraction of~$C_1$.
		Up until the creation of~$\lambda$, the move sequence is the same as in~$h$.
		Since~$\lambda$ contains a subset (all spikes except~$s_2$) of the spikes of~$\sigma$, the spiked tree can be created without surpassing the height of~$h$.
		After the creation of~$\sigma$ in~$h$ and~$\lambda$ in~$h'$, locally, the level curves of~$h$ and~$h'$ differ only in a small neighborhood of~$\pi$, so that all moves of~$h$ except those crossing~$\pi$ can still be performed in~$h'$.
		Because~$s_2$ is the only spike along~$e^*$ that lies between~$s_1$ and~$s_3$, the next move that crosses~$\pi$ is the unspike move, call it~$z$, following the contraction of~$C_1$.
		The level curve of~$h'$ just before~$z$ is identical to the level curve of~$h$ just after~$z$, so it is safe to omit move~$z$ in~$h'$.
		All subsequent level curves of~$h$ and~$h'$ are identical, so we conclude that~$h'$ is an optimal monotone isotopy (with fewer moves).
		Therefore, the reduced monotone isotopy of~$h'$ has fewer moves, contradicting that~$h$ was good.

		The proof for the case where~$C_2$ contracts first, is symmetrical, except that the spiked tree~$\lambda$ is created differently.
		In this case, we define~$\lambda$ to be~$\sigma_0 s_1 \sigma_1$, whose endpoint is connected to~$\sigma_2 s_3 \sigma_3$ at the start or endpoint of~$s_3$.
		When spiking this tree, the direction of the spikes on~$\sigma_2$ (and sometimes~$\sigma_3$) is reversed, but this does not affect the proof.

		Hence, in a good homotopy, no spiked path spikes the same edge~$5$ times.

	\end{proof}

	\begin{theorem}\label{T:NP}
	For~$\gamma_0$ and~$\gamma_1$ bounding an annulus with $n$ faces and $m$ edges, there is a homotopy of minimum height that has at most $O(mn)$ moves. Therefore, deciding whether their \textsc{Homotopy Height} is at most~$L$ is in \NP.
	\end{theorem}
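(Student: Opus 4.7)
The plan is to assemble the bounds from the preceding lemmas into a polynomial bound on the number of moves in a good homotopy, and then use this homotopy itself as the NP witness. Fix a good homotopy $h$, that is, a minimum-height reduced monotone isotopy with a minimum number of moves; such an $h$ exists because the class of minimum-height monotone isotopies is nonempty by Theorems~\ref{T:isotopydiscrete} and~\ref{T:monotonediscrete}, and one may apply the reduction rules of Section~\ref{S:NP} and then pick an isotopy with the fewest moves inside this class. By Lemma~\ref{L:polynomialfaceflips}, $h$ contains exactly $n$ face-flips, one per face of $G$.

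Next I would bound the spike and unspike moves. By Lemma~\ref{L:reduced}, the moves of $h$ split into at most $n+1$ consecutive runs separated by face-flips, where each run consists of a path of unspikes followed by a path of spikes. Lemma~\ref{L:shortSpikedPaths} bounds the number of times a single edge can appear in either such a path by $4$, so each of the $n+1$ runs contains at most $4m$ spike moves and at most $4m$ unspike moves. Summing over all runs together with the $n$ face-flips yields a total move count of $n + (n+1)\cdot 8m = O(mn)$, giving the first statement of the theorem.

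For the NP-membership conclusion, the certificate is simply such a minimum-height homotopy $h$, encoded as the list of its $O(mn)$ moves. Each move is either a face-flip (encoded by the index of its face), an edge-spike, or an edge-unspike (encoded by the index of the edge and its location along the current curve), so the whole certificate has bitsize polynomial in the input. The verifier checks in polynomial time that (i) the first curve equals $\gamma_0$ and the last equals $\gamma_1$; (ii) each move is locally valid with respect to the current curve according to the definitions of face-flip, edge-spike, and edge-unspike given in Section~\ref{S:preliminaries}; and (iii) the weighted length of each intermediate curve is at most $L$. The weights are part of the input, and the lengths are simply the sum of the weights along the (at most $O(mn)$-long) intermediate curves, so verification runs in time polynomial in the input size. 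Hence \textsc{Homotopy Height} $\le L$ lies in \NP.

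The only nonroutine step is confirming that the $O(mn)$ bound survives the encoding: since the length of every intermediate curve is at most $L$, and $L$ itself is bounded by the total weight (which fits in the input), the curves never grow longer than the edge-weight bit-length times $O(mn)$, so there is no hidden blow-up in certificate size. Everything else is bookkeeping on top of Lemmas~\ref{L:polynomialfaceflips}, \ref{L:reduced}, and~\ref{L:shortSpikedPaths}, which already did the substantive combinatorial work.
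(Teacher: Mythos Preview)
Your proof is correct and follows essentially the same approach as the paper: you combine Lemmas~\ref{L:polynomialfaceflips}, \ref{L:reduced}, and~\ref{L:shortSpikedPaths} to obtain the same $8m(n+1)+n=O(mn)$ move bound (the paper sums over edges first rather than over runs, but the arithmetic is identical), and then use the resulting short homotopy as the \NP{} certificate. Your added remarks on certificate encoding and verifier runtime are consistent with, and slightly more detailed than, the paper's version.
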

	\begin{proof}
		Let~$n$ be the number of faces, and~$m$ the number of edges in~$G$.
		As a direct consequence of Lemmas~\ref{L:polynomialfaceflips} and~\ref{L:shortSpikedPaths}, there is a good homotopy that spikes each edge at most~$4(n+1)$ times and unspikes each edge at most~$4(n+1)$ times.
		So there is a homotopy of minimum height that has at most~$8m(n+1)+n=O(mn)$ moves.
		Testing whether this homotopy indeed has height at most~$L$ can be done by computing the maximum length over its (polynomially many) level curves, each containing a polynomial number of edges, and comparing this maximum with~$L$.
		Given a good homotopy, all of this can be done in polynomial time assuming addition and comparisons of numbers takes polynomial time.
	\end{proof}

        We note that the \textsc{Homotopy Height} problem can also be defined in slightly different settings, for example 
        \begin{itemize}
        \item $\gamma_0$ and $\gamma_1$ are two paths with common endpoints $s$ and $t$, such that $\gamma_0 \cup \gamma_1$ is the boundary of a combinatorial disk. Then $\gamma_0$ is homotopic to $\gamma_1$ with fixed endpoints, and we are interested in computing the optimal height of this homotopy. This is the \textsc{Homotopy Height} problem considered by E. Chambers and Letscher~\cite{homotopyheight}.
        \item There is a single curve $\gamma$ forming the boundary of a combinatorial disk. This curve is contractible, and we are interested in computing the optimal height of such a contraction. This is one of the settings considered in~\cite{ccmor-mcbd-17}.
        \end{itemize}

In both these cases, the Theorems~\ref{T:isotopydiscrete} and~\ref{T:monotonediscrete} have analogues establishing that some optimal homotopy is an isotopy and is monotone. The rest of our proof techniques then readily apply, and prove that the \textsc{Homotopy Height} problem in these cases is also in \NP. The next section investigates more distant variants.

\section{Variants and approximation algorithms}\label{S:variants}
\subsection{Homotopic Fr\'echet distance}
There is a strong connection between the problem of \textsc{Homotopy Height} and the problem of \textsc{Homotopic Fr\'echet distance}, which we now recall.  As in~\cite{hnss-hwdmml-16}, our setting is the one of a disk $D$ with four points~$p_0$, $q_0$, $q_1$ and $p_1$ on the boundary, connected by four disjoint boundary arcs~$\gamma_0$,~$\gamma_1$,~$P$ and~$Q$, with~$\gamma_0$ from~$p_0$ to~$q_0$;~$\gamma_1$ from~$p_1$ to~$q_1$;~$P$ from~$p_0$ to~$p_1$; and~$Q$ from~$q_0$ to~$q_1$, see Figure~\ref{F:variants}, left. A homotopy between $\gamma_0$ and $\gamma_1$ is a series of elementary moves connecting curves of $D$ with one endpoint on $P$ and the other on $Q$, where the collection of curves starts at $\gamma_0$ and ends at $\gamma_1$. The \textsc{Homotopic Fr\'echet distance} between $P$ and $Q$ is the height of a homotopy between $\gamma_0$ and $\gamma_1$ of minimal height. The common intuition for this distance is that it is the minimal length of a leash needed for a man on $P$ to walk his dog along $Q$, where the leash may stretch but cannot be lifted out of the underlying space.

We note that this is slightly different than the original setting for homotopic Fr\'echet distance in the original work~\cite{CVE08}, where an exact algorithm is presented for the plane minus a set of polygonal obstacles.
In the original work, the start and end leashes are not fixed, and in fact the bulk of the work is in determining an optimal relative homotopy class in order to find the best homotopy.

	\begin{figure}[ht]\centering%
		\includegraphics{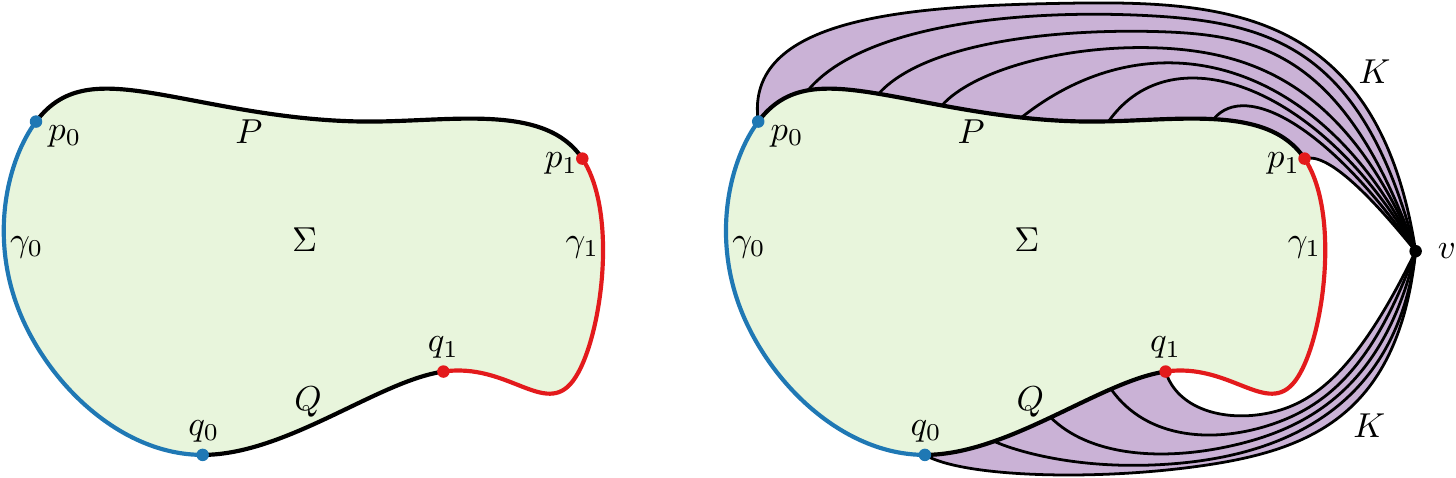}
		\caption{The setting of homotopic Fr\'echet distance.}\label{F:variants}
	\end{figure}%

\begin{proposition}
The \textsc{Homotopic Fr\'echet distance} problem is in \textbf{NP}.
\end{proposition}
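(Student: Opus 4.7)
The plan is to reduce the Fr\'echet distance instance to the fixed-endpoint variant of \textsc{Homotopy Height} that the paragraph following Theorem~\ref{T:NP} already places in \textbf{NP}.

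From the Fr\'echet instance $(D,\gamma_0,\gamma_1,P,Q)$, I would first build a new cross-metric disk $D'$ by contracting each of the boundary arcs $P$ and $Q$ to a single point, which I will call $p^*$ and $q^*$. Because $P$ and $Q$ lie on the boundary of the cross-metric surface, they carry no dual edges, so this contraction preserves the length of every curve, since lengths are counted by the weighted crossings with the interior dual graph. The boundary of $D'$ then consists of two paths $\gamma_0,\gamma_1$ from $p^*$ to $q^*$, which is exactly the first setting listed after Theorem~\ref{T:NP}.

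Next I would check that a Fr\'echet homotopy in $D$ corresponds, length-for-length and move-for-move, to a homotopy of paths from $p^*$ to $q^*$ in $D'$. The quotient map sends any level curve of a Fr\'echet homotopy to a path in $D'$ of equal length, while any move in $D'$ lifts (by choosing lifts of $p^*$ and $q^*$ on $P$ and $Q$) to a move of a Fr\'echet-style homotopy in $D$; the sliding of Fr\'echet endpoints along $P$ and $Q$ becomes trivial after identification at $p^*, q^*$. Consequently the Fr\'echet distance between $\gamma_0$ and $\gamma_1$ in $D$ equals the homotopy height between $\gamma_0$ and $\gamma_1$ with fixed endpoints $p^*, q^*$ in $D'$, which the end of Section~\ref{S:NP} places in \textbf{NP}. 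Pulling back a polynomial-size certificate in $D'$ through the contraction then yields a polynomial-size Fr\'echet certificate.

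The main subtlety I expect will be the careful handling of moves near $P$ and $Q$: a Fr\'echet homotopy whose endpoints slide along $P$ or $Q$ uses moves that involve boundary edges, and we must verify that each such move either corresponds to a genuine move in $D'$ or collapses to a trivial move at $p^*$ or $q^*$, without inflating the count. Once this bookkeeping is done, the polynomial bound on the number of moves provided by the analogue of Theorem~\ref{T:NP} for the fixed-endpoint variant transfers directly, finishing the proof.
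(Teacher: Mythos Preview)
Your reduction is correct but takes a different route from the paper's. The paper does \emph{not} contract $P$ and $Q$; instead it adds a single new vertex $v$ together with edges of very large weight $K$ from $v$ to every vertex of $P$ and of $Q$, turning the disk into a (pinched) annulus with boundaries $\gamma_0' = \gamma_0 \cup \{p_0v, vq_0\}$ and $\gamma_1' = \gamma_1 \cup \{p_1v, vq_1\}$. It then invokes the \emph{annulus} version of \textsc{Homotopy Height} (the main Theorem~\ref{T:NP}), and uses Lemma~\ref{L:spanning} applied to the zero-length path at $v$ to argue that an optimal homotopy passes through $v$ exactly once; the choice of $K$ forces each level curve to use exactly two of the heavy edges, so the annulus height equals the Fr\'echet distance plus $2K$.

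What each approach buys: the paper's cone construction stays entirely within the fully developed annulus framework and leans on Lemma~\ref{L:spanning}, so nothing further needs to be checked. Your contraction is conceptually more direct and avoids the large-weight trick, but it targets the fixed-endpoint disk variant, which in the paper is only asserted to be in \textbf{NP} in a short remark rather than proved in detail; you also inherit (as you note) the bookkeeping of how endpoint-sliding moves along $P$ and $Q$ become moves at the degenerate boundary points $p^*$, $q^*$ where many dual half-edges now meet. Both reductions are sound, but the paper's is more self-contained given what has actually been proved.
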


\begin{proof}
	We reduce \textsc{Homotopic Fr\'echet Distance} to \textsc{Homotopy
		Height} using the following construction. We add a vertex $v$ and
	edges of weight $K$ between this vertex and all the vertices of the
	paths $P$ and $Q$, where $K$ is a constant greater than the sum of
	the weights of the edges of the disk, as well as all the
	intermediate triangles, see Figure~\ref{F:variants}, right.
	This results in a pinched annulus $A$, with
	two boundaries $\gamma_0'$ and $\gamma_1'$ obtained from the paths
	$\gamma_0$ and $\gamma_1$, both completed into closed curves using the
	additional vertex $v$. We claim that an optimal homotopy between
	$\gamma_0$ and $\gamma_1$ translates into an optimal homotopy in $A$
	between $\gamma_0'$ and $\gamma_1'$, and vice-versa. Indeed, by
	Lemma~\ref{L:spanning}, there exists an optimal homotopy in $A$ such
	that any intermediate curve crosses the shortest path between
	$\gamma'_0$ and $\gamma'_1$ exactly once, and in our case the shortest
	path is the zero length path starting and ending at the vertex
	$v$. Furthermore, if the weight $K$ is big enough, the level curves
	of an optimal homotopy between $\gamma'_0$ and $\gamma'_1$ will
	always use exactly two of the edges of weight $K$, since two are
	needed but any more would be too expensive. Thus, an optimal
	homotopy between $\gamma_0'$ and $\gamma_1'$ translates directly
	into an optimal homotopy between $\gamma_0$ and $\gamma_1$ after
	cutting on $v$ and removing the edges linked to $v$ and
	vice-versa. The homotopy height is increased by exactly $2K$ in
	this translation.
\end{proof}

\noindent
Har-Peled, Nayyeri, Salvatipour and Sidiropoulos~\cite{hnss-hwdmml-16} provide an algorithm to compute in $O(n \log n)$ time a homotopy of height~$O(d\log n)$, where~$d$ is a lower bound on the height of an optimal homotopy, and~$n$ is the complexity of~$\Sigma$.
In particular, one can set~$d$ to be the maximum of~$\|\gamma_0\|$,~$\|\gamma_1\|$, the diameter of~$\Sigma$, and half of the total weight of the boundary of any face.
This yields an~$O(\log n)$ approximation for \textsc{Homotopic Fr\'echet distance}\footnote{This algorithm assumes triangular faces, but using our definition of~$d$, one can extend the algorithm of~\cite{hnss-hwdmml-16} to also work with polygonal faces.}.
We show here that their algorithm can be adapted to yield an $O(\log n)$ approximation for \textsc{Homotopy Height}.

\begin{proposition}\label{C:logn}
  One can compute in $O(n \log n)$ time an $O(\log n)$-approximation of \textsc{Homotopy Height}.
\end{proposition}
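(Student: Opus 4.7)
The plan is to reduce \textsc{Homotopy Height} on the annulus $A$ to \textsc{Homotopic Fr\'echet distance} on a disk and then apply the algorithm of Har-Peled et al.~\cite{hnss-hwdmml-16}. A naive single-cut reduction (cut $A$ along a shortest path from $\gamma_0$ to $\gamma_1$) does not suffice in general, since such a path may be arbitrarily longer than the optimal height $h^*$ (for instance on a long thin cylinder, where $h^*$ is the small circumference while any $\gamma_0$-to-$\gamma_1$ path must traverse the long length). The key adaptation is to first slice $A$ into short sub-annuli and only then perform the cut reduction on each slice.

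Concretely, I would compute the shortest essential cycle $\alpha$ of $A$ in $O(n\log n)$ time; since every level curve of any homotopy from $\gamma_0$ to $\gamma_1$ is essential, $\ell(\alpha) \le h^*$. Then I would slice $A$ along shortest essential cycles $\gamma_0 = \alpha_0, \alpha_1, \ldots, \alpha_k = \gamma_1$ at geodesic distances roughly $\ell(\alpha)$ apart, so that each sub-annulus $A_i = A(\alpha_{i-1}, \alpha_i)$ satisfies $\|\alpha_i\| \le h^*$ and has diameter $O(h^*)$. In each $A_i$, cut along a shortest path $\pi_i$ between its boundaries (of length $O(h^*)$ by the diameter bound) to obtain a disk $D_i$ matching the \textsc{Homotopic Fr\'echet distance} setup. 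Running Har-Peled et al.'s algorithm on each $D_i$ then yields a homotopy of height $O(d_i \log n_i)$, where $d_i$ is a lower bound on the Fr\'echet distance in $D_i$. By Lemma~\ref{L:spanning}, the Fr\'echet distance in $D_i$ is at most the homotopy height in $A_i$, itself at most $h^*$, and the natural choice of $d_i$---the max of $\|\alpha_{i-1}\|,\|\alpha_i\|$, the diameter of $D_i$, and half of any face perimeter (as in Lemma~\ref{L:polynomialfaceflips})---is $O(h^*)$. Thus each output has height $O(h^* \log n)$. I would glue each sub-homotopy back to $A_i$ by appending to each level curve a subpath of $\pi_i$ (of length $O(h^*)$) to close it into a cycle, and then compose the $A_i$-homotopies using the retraction machinery of Lemma~\ref{L:semiGeodesic} to obtain a single homotopy from $\gamma_0$ to $\gamma_1$ of height $O(h^* \log n)$.

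The main obstacle is the slicing step: producing sub-annuli whose diameters are simultaneously $O(h^*)$ and whose total combinatorial complexity is $O(n)$, so that the aggregate running time of the Fr\'echet subroutines remains $O(n\log n)$. This requires choosing the $\alpha_i$ as shortest essential cycles at carefully spaced geodesic distances from $\gamma_0$ while ensuring that each face of $G^*$ lies in exactly one sub-annulus, and more delicately that the correct spacing can be selected without actually knowing $h^*$ in advance (perhaps by doubling-style guessing or by using the shortest essential cycle $\alpha$ as a proxy). A secondary technicality is matching the adjacent sub-homotopies at the shared boundaries $\alpha_i$, which can be arranged by retracting each pair of adjacent sub-homotopies onto $\alpha_i$ via Lemma~\ref{L:semiGeodesic} without increasing their heights.
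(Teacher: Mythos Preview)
Your rejection of the single-cut reduction is based on a mistaken analysis, and the paper's proof uses exactly that reduction. You are right that on a long thin cylinder the shortest path $\mathcal{P}$ from $\gamma_0$ to $\gamma_1$ can be much longer than $h^*$, but this does not matter: when a level curve $h(t)$ of the disk homotopy is closed up in the annulus, one appends only the \emph{subpath} $\mathcal{P}'_t$ of $\mathcal{P}$ between the two endpoints of $h(t)$. Because $\mathcal{P}$ is a shortest path, every subpath of it is a shortest path between its own endpoints, so $|\mathcal{P}'_t|\le |h(t)|$. Hence the closed curve has length at most $2|h(t)|$. Combining this with Lemma~\ref{L:spanning} (which gives $L\le h^*$ for the optimal Fr\'echet height $L$ in the cut-open disk) and the $O(L\log n)$ guarantee of Har-Peled et al., the glued homotopy has height $O(h^*\log n)$. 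No slicing is needed.

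Your slicing scheme might be salvageable, but it carries the very technical obstacles you flag (controlling the number and total complexity of slices without knowing $h^*$, producing the $\alpha_i$ in $O(n\log n)$ total time, and stitching at the interfaces), none of which you resolve. More importantly, it is unnecessary once you notice that the closing subpath is automatically bounded by the level curve it completes. So the proposal is not wrong in spirit, but it bypasses the one-line observation that makes the simple approach work and replaces it with an incomplete and substantially harder construction.
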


\begin{proof}
Starting with an annulus and two boundary curves $\gamma_0$ and
$\gamma_1$, we first compute a shortest path $\mathcal{P}$ between the
boundary curves $\gamma_0$ and $\gamma_1$ and cut along $\mathcal{P}$
to obtain a disk $D$. This brings us to the setting of
\textsc{Homotopic Fr\'echet Distance}, and we can apply the
aforementioned algorithm and obtain a homotopy $h$. In order to
recover a homotopy between $\gamma_0$ and $\gamma_1$, we glue back the
disk along $\mathcal{P}$ into an annulus, and the level curves of $h$
are completed into closed curves by using subpaths of $\mathcal{P}$,
this gives us a homotopy $h'$. It remains to show that this is an
$O(\log n)$ approximation of the optimal homotopy. By
Lemma~\ref{L:spanning}, some optimal homotopy between $\gamma_0$
and $\gamma_1$ has level curves cutting $\mathcal{P}$ exactly
once. Thus, the height $L$ of an optimal homotopy in
the disk $D$ is a lower bound for the height of an optimal homotopy in the annulus $A$. Furthermore, each level curve
$\gamma_t$ of $h'$ consist of two subpaths, one being a level curve
$h(t)$ of $h$ and the other being a subpath $\mathcal{P}'_t$ of
$\mathcal{P}$. Since $\mathcal{P}$ is a shortest path,
$\mathcal{P}'_t$ is also a shortest path between its endpoints, so it
is shorter than $h(t)$ since they have the same endpoints. By
construction, the length of $h(t)$ is $O(L \log n)$, and thus
the length of $\gamma_t$ is $O(2L \log n)=O(L \log n)$. This concludes
the proof.
\end{proof}
\subsection{Minimum height linear layouts}

We also show that a seemingly unrelated graph drawing problem is
directly equivalent to the \textsc{Homotopy Height} problem. A
\emph{linear layout} is an embedding of a planar graph where the edges have isolated tangencies with the vertical line, and all the vertices have
distinct $x$ coordinates. The \textsc{Minimum Height Linear
	Layout} problem is the following one: Given a planar embedding of an edge-weighted
graph $G$, find a homeomorphic linear layout of $G$ in $\R^2$ such that the maximal weight of
the vertical lines is minimized. Here, the weight of a vertical line
is the sum of the weights of the edges that it crosses, and (similarly to the cross-metric setting), vertical
lines crossing tangent to the edges or crossing vertices are not counted.

\begin{theorem}
The \textsc{Minimum Height Linear Layout} problem is equivalent to the \textsc{Homotopy Height} problem.
\end{theorem}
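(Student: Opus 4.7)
The plan is to prove equivalence by exhibiting height-preserving reductions in both directions, exploiting the fact that both problems model the same underlying phenomenon: sweeping an annular region in a controlled, monotone way. In the forward direction, a linear layout is literally swept by vertical lines; in the reverse direction, an optimal homotopy, after being made monotone and reduced via the lemmas of Section~\ref{S:lemmas}, induces a total order of the faces that serves as a linear layout.

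From \textsc{Linear Layout} to \textsc{Homotopy Height}: Given a layout $\Lambda$ of a planar graph $G$ of height $L$, I would place $\Lambda$ in a bounded strip $[x_L,x_R]\times\R$ chosen so that all vertices and edges lie strictly inside, and then identify the top and bottom of the strip to form a cylindrical annulus $A$ whose two boundary circles $\gamma_0=\{x_L\}\times S^1$ and $\gamma_1=\{x_R\}\times S^1$ serve as the boundary cycles. (A zero-length vertical edge at $x_L$ and at $x_R$ is added if needed to make the embedding cellular.) The family of vertical circles $\{x\}\times S^1$ is a continuous monotone isotopy from $\gamma_0$ to $\gamma_1$; generically each $x$-coordinate of a vertex triggers a face-flip and each tangency of an edge with a vertical line triggers a spike or unspike. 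The length of each intermediate circle equals the weight of the corresponding vertical line, so this monotone isotopy has height exactly $L$.

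From \textsc{Homotopy Height} to \textsc{Linear Layout}: Given an annulus $(A,G)$ of homotopy height $L$, I would invoke Theorem~\ref{T:monotonediscrete} to obtain an optimal monotone isotopy $h$, use Lemma~\ref{L:polynomialfaceflips} to ensure each face is flipped exactly once (giving a well-defined total order on faces), and cut $A$ along a shortest path $\pi$ from $\gamma_0$ to $\gamma_1$. Lemma~\ref{L:spanning} lets me further assume every level curve of $h$ crosses $\pi$ exactly once, so the cut annulus, parameterized as $[0,1]\times[0,1]$, is swept left-to-right by simple arcs. Assigning each face of $G$ the $x$-coordinate of the time it is flipped and drawing edges by tracing the evolving level curves produces a linear layout (of the dual graph, which is planar) whose vertical line at $x$ has weight equal to the length of the corresponding level curve, hence at most $L$.

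The main obstacle is bookkeeping the topology so that the embedding class is preserved across the transformations. Concretely, one must (i) check in the forward direction that compactifying the strip and adding the boundary cycles yields a cellular embedding whose faces agree with those of the planar embedding of $G$, and (ii) verify in the reverse direction that cutting along $\pi$ and unrolling really produces a planar embedding homeomorphic to the dual embedding, so the output is a valid linear layout. The structural lemmas of the previous sections do the heavy lifting here: monotonicity rules out backtracking that would spoil the $x$-ordering, the single-flip-per-face property makes the ordering canonical, and the single-crossing-with-$\pi$ property ensures each level curve unrolls to a single arc across the rectangle rather than a more complicated object.
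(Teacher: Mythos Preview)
Your core intuition---a monotone isotopy is literally a sweep by vertical lines---matches the paper's. The substantive difference is \emph{which} variant of \textsc{Homotopy Height} you target. The paper does not wrap the plane into an annulus at all: it punctures the sphere at the dual vertex of the outer face of~$G$, obtaining a disk with a zero-length boundary, picks two points $s,t$ on that boundary to split it into paths~$L$ and~$R$, and shows that the linear-layout height of~$G$ equals the fixed-endpoint homotopy height between~$L$ and~$R$ (the disk variant mentioned at the end of Section~\ref{S:NP}). In that pairing both directions are immediate: the vertical sweep \emph{is} a monotone isotopy between~$L$ and~$R$, and conversely any optimal homotopy between~$L$ and~$R$, made monotone by Theorem~\ref{T:monotonediscrete}, is homeomorphic to a vertical sweep and hence gives a linear layout of~$G$ of the same height. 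No cutting, no gluing, and no appeal to Lemma~\ref{L:spanning} is needed.

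Your route via the annulus introduces a genuine gap. You give two reductions that are not inverses of one another, and for each you only establish one inequality. In the forward direction you show the wrapped annulus has homotopy height at most the layout height~$L$, but you never argue the reverse: an optimal homotopy of the cylinder might exploit the identification and cross the seam several times, so it does not automatically unwrap to a linear layout of the \emph{original} graph~$G$. (You would need to apply Lemma~\ref{L:spanning} with~$\pi$ equal to the seam and then argue that the seam is a shortest path---neither step is present.) In the reverse direction you cut an arbitrary annulus along a shortest path~$\pi$ and lay out the resulting graph, but that graph is~$G^*$ \emph{cut along~$\pi$}, not~$G^*$ itself, and you do not show that every linear layout of this cut graph lifts back to a homotopy of the annulus of the same height. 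So as written, neither reduction certifies equality of optimal values, and the ``equivalence'' is not established. The paper avoids all of this by choosing the disk variant, where the same instance is analyzed in both directions.
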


	\begin{figure}[h]\centering%
		\includegraphics{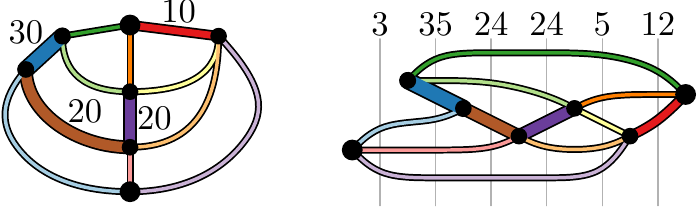}%
		\caption{Dual representation of Figure~\ref{F:G} (left) and Figure~\ref{F:Gh} (right).}%
		\label{F:dualGh}%
	\end{figure}

        \begin{proof}
Indeed, a linear layout of a planar graph $G$ naturally induces a
discrete homotopy sweeping its dual graph $G^*$. More formally, we drill a small hole around the vertex dual to the outer face of $G$, and we view its complement as a disk $D$ which is a cross-metric surface for the graph $G$. Since the hole was drilled in the middle of the face of $G$, its boundary has zero length. We pick two arbitrary vertices $s$ and $t$ on it, which cuts the boundary into two paths $L$ and $R$. Then we claim that a minimum height linear layout of $G$ is equivalent to a homotopy of minimum height between $L$ and $R$ (where the endpoints are fixed)\footnote{The point of the somewhat artificial construction with $L$ and $R$ is to force the homotopy to go through the outer face of $G$ at all times.}. Indeed, whenever
the sweep of $\mathbb{R}^2$ induced by the vertical lines crosses an
edge or passes a vertex, by the dual interpretation of homotopies with
cross-metric surfaces outlined in the preliminaries, it amounts to
doing a face or an edge move, and thus the whole vertical sweep
defines a homotopy between the two paths $L$ and $R$. Furthermore, this homotopy is an isotopy, since the vertical
lines are simple, and a monotone one since they only go in a single
direction. Conversely, a discrete homotopy of optimal height between
$L$ and $R$ can be ``straightened'' into a linear layout: by
Theorem~\ref{T:monotonediscrete}, one can assume such a homotopy $h$ to be an isotopy
and to be monotone, and therefore the succession of dual moves of $h$
with respect to $G$ are homeomorphic to a sweep of $G$ by vertical
lines, as pictured in Figure~\ref{F:dualGh}. An optimal homotopy
amounts, via this homeomorphism, to finding a linear layout of minimal
weight.
\end{proof}

In particular, the \textsc{Minimum Height Linear Layout} problem is in NP and admits an $O(\log n)$-approximation algorithm.

\paragraph*{Acknowledgements.} We are grateful to Tasos Sidiropoulos for his involvement in the early stages of this research, and to Gregory Chambers and Regina Rotman for helpful discussions. 
This research began while partially supported through the program ``Simons Visiting Professorship" by the Mathematisches Forschungsinstitut Oberwolfach in 2015.
Erin Chambers is supported in part by NSF grants IIS-1319944,  CCF-1054779, and CCF-1614562.
Arnaud de Mesmay is partially supported by the ANR project ANR-16-CE40-0009-01 (GATO).
Tim Ophelders is supported by the Netherlands Organisation for Scientific Research (NWO) under project no. 639.023.208.

	\bibliographystyle{siam}
	\bibliography{biblio}
\end{document}